\begin{document}

\title{Nonlinear Controller Design with Prediction Horizon Time Reduction Applied to Unstable CSTR System}

\author[1]{Chinmay Rajhans*}

\author[2]{Sowmya Gupta}


\authormark{AUTHOR ONE \textsc{et al}}

\address[1]{\orgdiv{Department of Electrical Engineering}, \orgname{IIT Bombay, Mumbai}, \orgaddress{\state{Maharashtra}, \country{India}}}

\address[2]{\orgdiv{Center for Research in Nano Technology and Science}, \orgname{IIT Bombay, Mumbai}, \orgaddress{\state{Maharashtra}, \country{India}}}


\corres{*Corresponding author name, \email{rajhanschinmay2@gmail.com}}

\presentaddress{Research Scholar, Department of Electrical Engineering, IIT Bombay, Powai, Mumbai 400076 Maharashtra, India}

\abstract[Summary]{Ensuring nominal asymptotic stability of the Nonlinear Model Predictive Control controller is not trivial. Stabilizing ingredients such as terminal penalty term and terminal region are crucial in establishing the asymptotic stability. Current work presents alternate approaches namely arbitrary controller based approach and linear quadratic regulator based approach, which provide larger degrees of freedom for enlarging the terminal region as against conservative approaches from the literature. Efficacy of the proposed approaches is demonstrated using benchmark two state continuous stirrer tank reactor system around an unstable operating point. Terminal regions obtained using the arbitrary controller based approach and linear quadratic regulator based approach are approximately 45 and 412 times larger by area measure when compared to the largest terminal region obtained using the approach from the literature. As a result, there is significant reduction in the prediction and control horizon time.  }


\keywords{Asymptotic stability, Lyapunov theory, Model Predictive Control, Nonlinear control, Continuous Time Systems, Continuous stirrer tank reactor}

\jnlcitation{\cname{%
\author{Williams K.}, 
\author{B. Hoskins}, 
\author{R. Lee}, 
\author{G. Masato}, and 
\author{T. Woollings}} (\cyear{2016}), 
\ctitle{A regime analysis of Atlantic winter jet variability applied to evaluate HadGEM3-GC2}, \cjournal{Q.J.R. Meteorol. Soc.}, \cvol{2017;00:1--6}.}

\maketitle

\footnotetext{\textbf{Abbreviations:} LQR, Linear Quadratic Regulator, NMPC, Nonlinear Model Predictive Control; TR, Terminal Region; ODEs, Ordinary Differential Equations}

\section{Introduction}
\label{sec:introduction}
Stability and performance are two crucial factors to be taken into consideration while designing any controller. One of the most promising optimization based controller is the Model Predictive Control (MPC). MPC finds applications in every field of science, engineering and technology \cite{Mayne1990, Qin2003, Camacho2007}. 
Various researchers have presented overview of MPC schemes \cite{Allgower1999, Mayne2000}. Primary concept for ensuring nominal stability involves inclusion of stabilizing constraints \cite{Rawlings1993, Michalska1993}. Commonly used stabilizing constraints include a) Terminal equality constraint, b) Terminal penalty term, and c) Terminal inequality constraint \cite{Allgower1999, Fontes2001, Rawlings2009}. A significant process has taken place in the area of nominal stability of linear MPC \cite{Muske1993} and Nonlinear MPC (NMPC) \cite{Oliveira1994, Sistu1996, Mayne2014, Rawlings2017}. Grimm et al. have presented examples when a significantly small change in any of the model parameters can alter the stability characteristics of NMPC \cite{Grimm2004}. Hence formally establishing the asymptotic stability becomes important, necessary and challenging.

Executing terminal equality constraint is convenient \cite{Keerthi1988}, however main limitation is that it is highly conservative and often leads to infeasibility specifically when using constrained formulations. Michalska and Mayne conceptualized dual mode MPC scheme where in the idea of terminal region was introduced \cite{Michalska1993}. NMPC controller is expected to drive the plant trajectory into a region, termed as terminal region, around the set point in a finite time using the feasible inputs. Subsequently local linear controller will take the system trajectory to the set point. This idea was extended by Chen and Allg\"ower where in NMPC controller was used inside the terminal region instead of using a linear controller, which has resulted in the concept of Quasi Infinite Horizon - Nonlinear Model Predictive Control (QIH-NMPC) scheme \cite{Chen1998}.

Region of attraction for NMPC is a set of initial conditions which result in all the constraints being satisfied with feasible inputs within a specified finite time. It may be noted that the size of the terminal region is directly correlated to the size of the feasible region i.e. the region of attraction. For a given finite horizon formulation with constant prediction horizon time, larger the terminal region results in a larger region of attraction. Having larger region of attraction indicates ability of controller to converge to the desired operating point from an initial condition which is far away from the set point \cite{Mayne2000}. Alternately, for identical initial conditions, controller would require smaller prediction horizon time to satisfy the terminal inequality constraint. Mhaskar et al. presented asymptotically stable NMPC design for continuous time switched systems \cite{Mhaskar2005}. Major drawbacks are explicit characterization of the feasible initial conditions and applicability to only switched systems.

Limon et al. presented design of NMPC without terminal inequality constraint. Concept involved appropriate scaling the terminal penalty term to compensate for the difference due to absence of the terminal inequality constraint \cite{Limon2006}. It may be noted that there is a limitation as to what extent designer can increase the terminal penalty term and results in smaller region of attraction. Pannocchia et al. presented an algorithm to convert infinite horizon constrained linear quadratic regulator formulation into a finite dimensions quadratic programming problem after assuming piece-wise linear inputs \cite{Pannocchia2010}. However the issue of convergence of solution and sub-optimality need to be addressed. Esterhuizen et al. presented NMPC asymptotic stability results without stabilizing terminal ingredients. However, two key assumptions of sufficiently longer prediction horizon and cost controllability assumption limit the applicability of the algorithm to limited systems \cite{Esterhuizen2021}. Jadbabaie et al. present unconstrained NMPC stability results without terminal ingredients. The approach makes use of gradual reduction of Lyapunov function eventually resulting in an asymptotic stability characteristics \cite{Jadbabaie2001}. However, amount of time required to reach the desired operating points may be very large and also the design is suitable for unconstrained systems. The proposed approach in this work is suitable for any kind of nonlinear continuous time system with inputs constraints.

Chen and Allg\"ower presented an approach for the computation of the terminal penalty term and also for the characterization of the terminal region for the continuous time NMPC formulation \cite{Chen1998}. Research involves local linearization at the set point followed by solving a modified Lyapunov equation. Subsequently Chen and Allg\"ower provide an approach to numerically characterize the terminal region using an inequality based conditions. First major drawback of their approach is a tuning parameter which is nearly independent of the NMPC formulation stage weighting matrices. Second limitation of Chen and Allg\"ower's approach is that, it provides a single scalar tuning parameter which restricts the design to one degree of freedom for shaping of the terminal region, hence, resulting in a very conservative terminal region. Chen and Allg\"ower's approach makes use of Linear Quadratic Regulator (LQR) controller, which is designed using the stage cost weighting matrices and in turn do not provide any degrees of freedom to the controller designer.

Several researchers have developed approaches for the terminal region characterization for NMPC formulations for the discrete time cases \cite{Limon2002, Johansen2004, Rajhans2017, Yu2017, Rajhans2019}. It may be noted that discrete time formulations require a separate considerations due to the concept of sampling time vastly affecting the terminal region shape and size \cite{Astrom1997, Grune2011, Rajhans2019}. Although the approaches developed for the discrete time QIH-NMPC formulations provide large degrees of freedom for enlarging the terminal region, however, their application to continuous time QIH-NMPC formulations is very limited. Hence, there is a need to develop approaches for the terminal region characterization for the continuous time NMPC formulations which provide large degrees of freedom.

Chen and Allg\"ower \cite{Chen1998a} established that terminal inequality constraint can be avoided when the terminal penalty term and the prediction horizon is chosen sufficiently large for continuous time NMPC formulation. However, the result is applicable only for stable set points or stable continuous time nonlinear systems. In general, for any kind of system, without the terminal ingredients, nominal stability of NMPC controller is not guaranteed. In addition, when the terminal inequality constraint is avoided, typically designer is required to use a relatively larger prediction horizon time which increases the computational burden significantly. Such limitation can be overcome by using terminal inequality constraint which assists in reducing the prediction horizon time \cite{Chen1998}.

The approach by Chen and Allg\"ower \cite{Chen1998} is based on linear controller designed at the origin and is applicable to any continuous time nonlinear system governed by Ordinary Differential Equations (ODEs). Lucia et al. \cite{Lucia2015} have extended this work by making use of nonlinear controller for design of the terminal ingredients. Their approaches is based on Taylor series expansions of the system dynamics with considering higher order terms of the stage weighting matrices. However, this approach is applicable to only a special class of continuous time systems where in time derivatives of the system dynamics are polynomial functions. In this work, two approaches are presented which are applicable to any type of nonlinear continuous time system governed by ODEs.

Rajhans et al. presented alternate arbitrary controller based approach for the computation of the terminal penalty and for the characterization of the terminal region for the continuous time NMPC formulations \cite{Rajhans2016}. Arbitrary controller based approach makes use of a single additive matrix as the tuning parameter for shaping of the terminal region. Current work converts norm based method to inequality based method, which assists in enlarging the terminal region. In the proposed approached in the current work, two tuning matrices are provided which further increase the degrees of freedom available with the controller designer. Proposed approach provides three degrees of freedom namely a) linear stabilizing controller, b) additive state weighting matrix, and c) additive input weighting matrix. Current work proposed one novel LQR based approach for the terminal region characterization, which provides two additive weighting matrices for enlarging the terminal region. 
 

Efficacy of the proposed approaches with three tuning parameters is demonstrated using simulations on a benchmark Chemical engineering system called Continuous Stirrer Tank Reactor (CSTR) \cite{Hicks1971}.  Various researchers have used two state CSTR system for demonstrating their controller performance \cite{Tenny2004, Ghaffari2013, Ellis2014, Narasingam2019, Ramesh2021}. However, application of the continuous time quasi infinite horizon NMPC with guaranteed stability is very limited and one additional novelty of the current work. In the demonstration example, it can be observed that the proposed approaches result in significantly larger terminal regions when compared to the approaches available in the literature. Work also presents closed loop simulations of the system under continuous time NMPC controller to validate the applicability of the controller in practical scenarios. Results pertaining to the reduction of the prediction horizon time are presented in detail.

Second section presents the continuous time NMPC formulation in detail. In addition, approach by Chen and Allg\"ower \cite{Chen1998} is stated formally along with its limitation. Third section presents the proposed arbitrary controller based approach using inequality method for the computation of the terminal penalty and for the characterization of the terminal region. In addition, third section also presented novel LQR based approach for the terminal region characterization. Subsequently, asymptotic stability result is presented. Forth section presents numerical characterization of the terminal region using the approaches presented in the third section.  Fifth section presents the terminal region characterization using demonstration case study. Sixth section details the CSTR continuous time simulation and results obtained using the CSTR case study. Seventh section gives the conclusions from the theory and cases study.

\section{Continuous Time NMPC Formulation} 
Consider a continuous time nonlinear system is given by 
\begin{eqnarray}
\frac{d{\mathbf{X}(t)}}{dt} = {\mathbf{f}_c} ({\mathbf{X}}(t),{\mathbf{U}}(t)) \label{csystem1} 
\end{eqnarray}
where $\mathbf{X}(t) \in \mathbb{R}^{n_x}$ denotes the state
vector in absolute terms and $\mathbf{u}(t) \in \mathbb{R}^{n_u}$ denotes the input vector in absolute terms. Let $(\mathbf{X}_s, \mathbf{U}_s)$ be the constant steady state of the system (\ref{csystem1}) i.e. $\mathbf{0} = {\mathbf{f}_c} (\mathbf{X}_s, \mathbf{U}_s)$. Defining shift of origin as follows: 
\begin{eqnarray}
\mathbf{x}(t) = \mathbf{X}(t) - \mathbf{X}_s \\ 
\mathbf{u}(t) = \mathbf{U}(t) - \mathbf{U}_s
\end{eqnarray}
After shift of origin, consider the continuous time nonlinear system given as  
\begin{eqnarray}
\frac{d({\mathbf{X}(t)-\mathbf{X}_s})}{dt} = {\mathbf{f}_c} ({\mathbf{X}(t)-\mathbf{X}_s},{\mathbf{U}(t)-\mathbf{U}_s}) \label{csystem2} 
\end{eqnarray}
Rewiring using a simpler notation gives 
\begin{align}
\frac{d{\mathbf{x}(t)}}{dt} &= {\mathbf{f}} ({\mathbf{x}}(t),{\mathbf{u}}(t)) \label{csystem} \\
{\bf{x}}(0) &= {\bf{x}}_0
\end{align}
where $\mathbf{x}(t) \in \mathcal{X} \subset \mathbb{R}^{n_x}$ denotes the state
vector and $\mathbf{u}(t) \in \mathcal{U} \subset \mathbb{R}^{n_u}$ denotes the input vector.

Assumptions are stated as follows: 
\begin{description}
\item[C1] System dynamics function $\mathbf{f}: \mathbb{R}^{n_x}\times \mathbb{R}^{n_u} \to \mathbb{%
R}^{n_x}$ is twice continuously differentiable. 
\item[C2] The origin $\mathbf{0} \in \mathbb{R}^{n_x}$ is an equilibrium point of the system (\ref{csystem}) i.e. $\mathbf{f}\left( \mathbf{0}, \mathbf{0} \right) =\mathbf{0}$.
\item[C3] The inputs $\mathbf{u}(t)$ are constrained inside 
a closed and convex set $\mathcal{U} \subset \mathbb{R}^{n_u}$.
\item[C4] The system (\ref{csystem}) has a unique solution for any initial
condition $\mathbf{x}_{0} \in \mathcal{X}$ and any piece wise right continuous input $\mathbf{u}%
(\cdot):[0,\infty) \to $ $\mathcal{U}$. 
\item[C5] The state $\mathbf{x}(t)$ is perfectly known at any time $t$ i.e. all the states are measured. 
\item[C6] External disturbances do not affect the system dynamics. 
\end{description}

\subsection{NMPC Formulation}
For the continuous time system given by (\ref{csystem}), NMPC formulation is stated as follows: 

\begin{equation}
\begin{array}{c}
\min \\ 
\overline{\mathbf{u}}_{[t, t+T_{p}]}%
\end{array}%
J\left( \mathbf{x}(t),\overline{\mathbf{u}}_{[t, t+T_{p}]}\right)  \label{COptimal}
\end{equation}%
with 
\begin{eqnarray}
J\left( \mathbf{x}(t),\overline{\mathbf{u}}_{[t, t+T_{p}]}\right)
&=&\int_{t}^{t+T_{p}}\left\{ 
\mathbf{z}(\tau)^T \mathbf{W}_{x} \mathbf{z}(\tau) + \overline{\mathbf{u}}(\tau)^T \mathbf{W}_{u} \overline{\mathbf{u}}(\tau) 
\right\} d\tau  +\mathbf{z}(t+T_p)^T \mathbf{P} \mathbf{z}(t+T_p) \label{StageCost} 
\end{eqnarray}%
\begin{equation}
\overline{\mathbf{u}}_{[t, t+T_{p}]}=\left\{ \mathbf{u}(\tau )\in \mathcal{U}%
:\tau \in \left[t, t+T_{p}\right] \right\} \label{InputSet}
\end{equation}%
subject to 
\begin{eqnarray} 
\frac{d\mathbf{z}(\tau)}{d\tau }&=&\mathbf{f}\left( \mathbf{z}(\tau ), %
\overline{\mathbf{u}}(\tau )\right) \text{ for } \tau \in \left[ t, t+T_{p}\right] \label{PredictedState} \\ 
\mathbf{z}(t)&=&\mathbf{x}(t) \label{InitialCondition} \\ 
\mathbf{z}\left( t+T_{p}\right) &\in& \Omega  \label{TerminalRegion}
\end{eqnarray}%
where $\mathbf{W}_{x}$ and $\mathbf{W}_{u}$ are state and input weighting matrices of dimension $\left( n_x \times n_x \right) $, $\left( n_u \times n_u \right) $ respectively. $\mathbf{P}$ is the terminal penalty matrix of dimension $\left( n_x \times n_x \right) $. $\mathbf{W}_{x}, \mathbf{W}_{u}, \mathbf{P}$ are symmetric positive definite matrices. $T_{p}$ is a finite prediction horizon time and is identical to the control horizon time. 
$\mathbf{z}(\tau )$ denotes the predicted state in the NMPC formulation and $%
\overline{\mathbf{u}}(\tau )$ denotes the future control input moves. The
set $\Omega$ is termed as the \emph{terminal region} in the neighborhood of
the origin. 
The set $\mathcal{X}_{T_p} \subset \mathcal{X} \subset \mathbb{R}^{n_x}$ is termed as the \emph{region of attraction} is the set of all feasible initial conditions i.e. it is a set of all initial conditions $\mathbf{x}_0$ such that terminal inequality constraint (\ref{TerminalRegion}) is satisfied with inputs constrained given by equation (\ref{InputSet}) satisfied. 

\subsection{Design and Implementation of NMPC Formulation}
The terminal region $\Omega$ is chosen as an invariant set for the nonlinear system (\ref{csystem}) controlled by local linear controller with gain matrix $\mathbf{K}$. The terminal penalty term is chosen such that for all trajectories starting from any point inside the terminal region $\Omega$, with approximation that a single cost term having larger value that the sum of all the predicted stage cost terms from end of horizon to infinity and is given as follows:  
\begin{equation}
\mathbf{z}(t+T_p)^T \mathbf{P} \mathbf{z}(t+T_p) \geq \int_{t+T_{p}}^{\infty }\left\{ 
\mathbf{z}(\tau)^T \mathbf{W}_{x} \mathbf{z}(\tau)
+ \overline{\mathbf{u}}(\tau) \mathbf{W}_{u} \overline{\mathbf{u}}(\tau)
\right\} d\tau  \label{TRCondition}
\end{equation}
with $\overline{\mathbf{u}}(\tau) = -\mathbf{K} \mathbf{z}(\tau) \in \mathcal{U}$ for all $\tau \geq (t+T_p)$ and for all $\mathbf{z}(t+T_p) \in \Omega$. 

It is assumed that the solution to the optimal problem (\ref{COptimal}) with stage cost defined by (\ref{StageCost}) with input set given by (\ref{InputSet}) subject to the predicted state dynamics (\ref{PredictedState}) with initial condition (\ref{InitialCondition}) and terminal constraint (\ref{TerminalRegion}) i.e. $\overline{\mathbf{u}}_{[t, t+T_{p}]}^*$ exists and can be computed numerically. Controller is implemented as a moving horizon framework. Accordingly, only the first control move 
\begin{eqnarray}
\mathbf{u}(t) = \overline{\mathbf{u}}^*(t) \label{InputMove}
\end{eqnarray}
is implemented in the plant. Entire process is repeated at next time point $t+\delta$ with $\delta$ being an sufficiently small sampling period. The term \emph{Quasi Infinite} is because of the fact that the NMPC formulation deplicts the stability properties of the infinite horizon formulation, however, the actual implementation is finite horizon. Such implementation is achieved with the help of the equation (\ref{TRCondition}). However, only ensuring the terminal penalty term satisfying the condition (\ref{TRCondition}) is not sufficient to guarantee the nominal asymptotic stability of the NMPC controller, hence terminal constraint as given by (\ref{TerminalRegion}) becomes inevitable. It may be noted that local linear controller with gain matrix $\mathbf{K}$ is not used for implementation of the NMPC controller and is only a mathematical construct to characterize the terminal region $\Omega$.

\subsection{Chen and Allg\"ower's Approach}
Before proceeding to the proposed arbitrary controller based approach, a look at Chen and Allg\"ower's approach is required. Consider, Jacobian linearization of the nonlinear system (\ref{csystem}) in the neighborhood the origin as,  
\begin{equation}
\frac{d\mathbf{x}(t)}{dt}=\mathbf{Ax}(t)+\mathbf{Bu}(t)  \label{CLinSys}
\end{equation}%
where 
\begin{equation*}
\mathbf{A=}\left[ \frac{\partial \mathbf{f}}{\partial \mathbf{x}}\right]
_{\left( \mathbf{0},\mathbf{0}\right) }\text{ and \ }%
\mathbf{B=}\left[ \frac{\partial \mathbf{f}}{\partial \mathbf{u}}\right]
_{\left( \mathbf{0},\mathbf{0}\right) }
\end{equation*}

One additional assumption is required at this stage. 
\begin{description}
\item[C7] The linearized system (\ref{CLinSys}) is stabilizable. 
\end{description}
Chen and Allg\"ower characterize the terminal region as, 
\begin{equation}
\Omega \equiv \left\{ \mathbf{x} \in \mathbb{R}^{n} | \mathbf{x}^{T}%
\mathbf{P} \mathbf{x} \leq \alpha, -\mathbf{Kx} \in \mathcal{U}%
\right\}
\end{equation}%
where linear gain $\mathbf{K}$ and the terminal penalty matrix $\mathbf{P}$ are the steady state solutions of the modified Lyapunov equation given as follows:    
\begin{equation}
\left( \mathbf{A}_{K} + \kappa \mathbf{I}\right) ^{T} \mathbf{P} +%
\mathbf{P} \left( \mathbf{A}_{K}+\kappa \mathbf{I}\right) =-\mathbf{Q%
}^*  \label{ChenLyapunov}
\end{equation}%
\begin{equation}
\mathbf{Q}^* = \mathbf{W}_{x} + \mathbf{K}^T \mathbf{W}_{u} \mathbf{K}  \label{Qstar}
\end{equation}
where $\mathbf{A}_{K} = \mathbf{A-BK}$ and parameter $\kappa > 0$ is chosen such that $\kappa < -Re \left[ \lambda _{\max }\left( \mathbf{A-BK} \right) \right]$. Note $Re \left[ \lambda _{\max }\left( \mathbf{A-BK} \right) \right]$ is the real part of the right most eigenvalue of $\mathbf{A}_{K}$ i.e. eigen value having largest real part and it is negative due to the fact that linear matrix $\mathbf{A}_{K}$ is stable by design. It can be noted that once stage cost weighting matrices $\mathbf{W}_{x}, \mathbf{W}_{u}$ are chosen, there is barely any degree of freedom left to the designer for shaping of the terminal region. This results in a very conservative terminal regions. The limitation is overcome by using the arbitrary controller based approach wherein additive tuning matrices are introduced which provide large degrees of freedom for enlarging of the terminal region and is presented in the subsequent section. 

\section{Alternate Approaches for the Terminal Region Characterization} 
In the arbitrary controller based approach, an arbitrary stabilizing linear controller is designed using any of the methods available in the literature such as pole placement \cite{Kailath1980, Albertos2006}, linear quadratic Gaussian control \cite{Kirk1970} and so on. 
We prove the following lemma for the arbitrary controller based approach: 

\begin{lemma} \label{lemma1} Suppose that assumptions C1 to C7 are satisfied and a stabilizing
feedback control law is designed i.e. $\mathbf{A}_{K}=(%
\mathbf{A-BK})$ is stable indicating all the eigenvalues have negative real part. Let $\Delta \mathbf{Q}$ is any positive definite matrix. Let matrix $\mathbf{P}$ denote the solution of the following modified Lyapunov equation: 
\begin{equation}
\mathbf{A}_{K}^T \mathbf{P} + \mathbf{P} \mathbf{A}_{K}=-(%
\mathbf{Q}^* + \Delta \mathbf{Q)}  \label{ACLyap}
\end{equation}%
where $\mathbf{Q}^*$ is defined by equation (\ref{Qstar}). Then there exists a
constant $\alpha > 0$ which defines an ellipsoid of the form 
\begin{equation}
\Omega \equiv \left\{ \mathbf{x} \in \mathbb{R}^{n_x} | \mathbf{x}^{T} \mathbf{%
P} \mathbf{x} \leq \alpha , -\mathbf{Kx} \in \mathcal{U} \right\}
\label{ACTR}
\end{equation}%
such that $\Omega$ is an invariant set for the nonlinear system given by (\ref{csystem}) with linear controller $\mathbf{u}(t) = - \mathbf{Kx}(t)$. Additionally, for any $\mathbf{x}(t+T_p) \in \Omega$ the inequality given by (\ref{TRCondition1}) holds true. 
\begin{equation}
\mathbf{z}(t+T_p)^T \mathbf{P} \mathbf{z}(t+T_p) \geq \int_{t+T_{p}}^{\infty }\left\{ 
\begin{array}{c}
\mathbf{z}(\tau)^T \mathbf{W}_{x} \mathbf{z}(\tau)
+ \overline{\mathbf{u}}(\tau) \mathbf{W}_{u} \overline{\mathbf{u}}(\tau)
\end{array} \right\} d\tau  \label{TRCondition1}
\end{equation}
\end{lemma}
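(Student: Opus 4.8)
The plan is to adapt the quasi\mbox{-}infinite\mbox{-}horizon argument of Chen and Allg\"ower, using the extra freedom in $\Delta\mathbf{Q}$ to absorb the higher\mbox{-}order nonlinear terms in place of the $\kappa\mathbf{I}$ shift appearing in (\ref{ChenLyapunov}). First I would close the loop with $\mathbf{u}(\tau)=-\mathbf{K}\mathbf{z}(\tau)$ and split the dynamics into its linearization and a remainder, writing $\mathbf{f}(\mathbf{z},-\mathbf{K}\mathbf{z})=\mathbf{A}_K\mathbf{z}+\mathbf{g}(\mathbf{z})$ with $\mathbf{g}(\mathbf{z}):=\mathbf{f}(\mathbf{z},-\mathbf{K}\mathbf{z})-\mathbf{A}_K\mathbf{z}$. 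By C1 the map $\mathbf{z}\mapsto\mathbf{f}(\mathbf{z},-\mathbf{K}\mathbf{z})$ is twice continuously differentiable, and by C2 together with the definitions of $\mathbf{A}$ and $\mathbf{B}$ one gets $\mathbf{g}(\mathbf{0})=\mathbf{0}$ and $\bigl[\partial\mathbf{g}/\partial\mathbf{z}\bigr]_{\mathbf{0}}=\mathbf{0}$; hence for every $\epsilon>0$ there is $r_\epsilon>0$ with $\|\mathbf{g}(\mathbf{z})\|\le\epsilon\|\mathbf{z}\|$ whenever $\|\mathbf{z}\|\le r_\epsilon$. Since $\mathbf{A}_K$ is Hurwitz and $\mathbf{Q}^*+\Delta\mathbf{Q}$ is positive definite, (\ref{ACLyap}) has a unique positive definite solution $\mathbf{P}$, and I would take $V(\mathbf{z}):=\mathbf{z}^T\mathbf{P}\mathbf{z}$ as a Lyapunov candidate.

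Differentiating $V$ along the closed\mbox{-}loop trajectory and substituting (\ref{ACLyap}) gives $\dot V=\mathbf{z}^T(\mathbf{A}_K^T\mathbf{P}+\mathbf{P}\mathbf{A}_K)\mathbf{z}+2\mathbf{z}^T\mathbf{P}\mathbf{g}(\mathbf{z})=-\mathbf{z}^T(\mathbf{Q}^*+\Delta\mathbf{Q})\mathbf{z}+2\mathbf{z}^T\mathbf{P}\mathbf{g}(\mathbf{z})$. On $\|\mathbf{z}\|\le r_\epsilon$ the cross term obeys $|2\mathbf{z}^T\mathbf{P}\mathbf{g}(\mathbf{z})|\le 2\|\mathbf{P}\|\,\|\mathbf{z}\|\,\|\mathbf{g}(\mathbf{z})\|\le 2\epsilon\|\mathbf{P}\|\,\|\mathbf{z}\|^2$, so fixing $\epsilon$ with $2\epsilon\|\mathbf{P}\|\le\lambda_{\min}(\Delta\mathbf{Q})$ yields $\dot V\le-\mathbf{z}^T\mathbf{Q}^*\mathbf{z}$ on that ball. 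Then I would choose $\alpha>0$ small enough that the sublevel set $\{\mathbf{z}:V(\mathbf{z})\le\alpha\}$ lies both inside $\{\|\mathbf{z}\|\le r_\epsilon\}$ (for which $\alpha\le\lambda_{\min}(\mathbf{P})\,r_\epsilon^{2}$ suffices) and inside $\{\mathbf{z}:-\mathbf{K}\mathbf{z}\in\mathcal{U}\}$, the latter being possible because $\mathcal{U}$ is a neighbourhood of the origin. With this $\alpha$, the set $\Omega$ of (\ref{ACTR}) coincides with $\{V\le\alpha\}$, and since $\dot V\le-\mathbf{z}^T\mathbf{Q}^*\mathbf{z}\le 0$ throughout $\Omega$, the set $\Omega$ is positively invariant for the closed loop; this is the invariance claim, and it also shows $-\mathbf{K}\mathbf{z}(\tau)\in\mathcal{U}$ for all $\tau\ge t+T_p$, so the local law is admissible along the tail.

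For the inequality (\ref{TRCondition1}) I would use that $\mathbf{Q}^*=\mathbf{W}_x+\mathbf{K}^T\mathbf{W}_u\mathbf{K}$ is positive definite (because $\mathbf{W}_x$ is), so $\dot V\le-\mathbf{z}^T\mathbf{Q}^*\mathbf{z}\le-\bigl(\lambda_{\min}(\mathbf{Q}^*)/\lambda_{\max}(\mathbf{P})\bigr)V$, whence $V(\mathbf{z}(\tau))\to 0$ (in fact exponentially) as $\tau\to\infty$ for every trajectory starting in $\Omega$. Writing $\overline{\mathbf{u}}(\tau)=-\mathbf{K}\mathbf{z}(\tau)$ and using $\mathbf{z}^T\mathbf{Q}^*\mathbf{z}=\mathbf{z}^T\mathbf{W}_x\mathbf{z}+\overline{\mathbf{u}}^T\mathbf{W}_u\overline{\mathbf{u}}$, integrating $\dot V\le-\mathbf{z}^T\mathbf{Q}^*\mathbf{z}$ from $t+T_p$ to $\infty$ gives $\int_{t+T_p}^{\infty}\{\mathbf{z}(\tau)^T\mathbf{W}_x\mathbf{z}(\tau)+\overline{\mathbf{u}}(\tau)^T\mathbf{W}_u\overline{\mathbf{u}}(\tau)\}\,d\tau\le V(\mathbf{z}(t+T_p))-\lim_{\tau\to\infty}V(\mathbf{z}(\tau))=\mathbf{z}(t+T_p)^T\mathbf{P}\mathbf{z}(t+T_p)$, which is exactly (\ref{TRCondition1}).

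I expect the main obstacle to be the bookkeeping in the choice of $\alpha$: one value must simultaneously meet three competing demands — the $V$\mbox{-}sublevel set must fit inside the Euclidean ball on which the Taylor estimate $\|\mathbf{g}(\mathbf{z})\|\le\epsilon\|\mathbf{z}\|$ holds with $\epsilon$ already pinned down by $\lambda_{\min}(\Delta\mathbf{Q})$ and $\|\mathbf{P}\|$; it must fit inside the input\mbox{-}admissible cone $\{-\mathbf{K}\mathbf{z}\in\mathcal{U}\}$; and it must be a genuine sublevel set so that the Lyapunov decrease really forces invariance. Passing between the $\mathbf{P}$\mbox{-}weighted norm and the Euclidean norm via $\lambda_{\min}(\mathbf{P})\|\mathbf{z}\|^{2}\le V(\mathbf{z})\le\lambda_{\max}(\mathbf{P})\|\mathbf{z}\|^{2}$ is what reconciles these. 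The conceptual content — and the only place the arbitrariness of $\Delta\mathbf{Q}$ is genuinely used — is that any positive definite $\Delta\mathbf{Q}$ supplies a strictly negative budget $-\mathbf{z}^T\Delta\mathbf{Q}\mathbf{z}$ that dominates the indefinite remainder $2\mathbf{z}^T\mathbf{P}\mathbf{g}(\mathbf{z})$ on a small enough neighbourhood of the origin, leaving the net rate below $-\mathbf{z}^T\mathbf{Q}^*\mathbf{z}$, which is precisely the integrand required to telescope the infinite tail of the stage cost into the terminal penalty.
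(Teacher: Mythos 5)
Your proposal is correct and follows essentially the same route as the paper's own proof: the decomposition $\mathbf{f}(\mathbf{z},-\mathbf{K}\mathbf{z})=\mathbf{A}_K\mathbf{z}+\mathbf{g}(\mathbf{z})$ is the paper's $\mathbf{\Phi}_K$, the Lyapunov candidate and substitution of (\ref{ACLyap}) are identical, and your small-gain condition $2\epsilon\|\mathbf{P}\|\le\lambda_{\min}(\Delta\mathbf{Q})$ is exactly the paper's ``Method A'' (norm-based) criterion, followed by the same integration over $[t+T_p,\infty)$. If anything you are more careful than the paper in justifying \emph{why} the bound $\|\mathbf{g}(\mathbf{z})\|\le\epsilon\|\mathbf{z}\|$ is achievable on a small enough sublevel set (via the vanishing Jacobian of the remainder at the origin); the paper additionally offers a less conservative inequality-based variant, but that is not needed to establish the existence claim of the lemma.
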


\begin{proof}
Since $\mathbf{A}_{K}=(\mathbf{A-BK})$ is stable, hence, the eigenvalues of $\mathbf{A}_{K}$ are having negative real part. Using the solvability condition of the modified Lyapunov equation, a unique $\mathbf{P} > 0$ can be computed which solves the equation (\ref{ACLyap}). According to Assumption C2, the origin $\mathbf{0} \in \mathbb{R}^{n_u}$ is in the interior of the input constraints set $\mathcal{U}$.
Accordingly, we can compute a constant $\gamma$ which defined a set $\Omega_{\gamma}$ 
such that 
\begin{equation}
\Omega_\gamma \equiv \left\{ \mathbf{x} \in \mathbb{R}^{n_x} | \mathbf{x}^{T}%
\mathbf{P} \mathbf{x} \leq \gamma, - \mathbf{Kx} \in \mathcal{U}%
\right\} \label{Omegagamma}
\end{equation}%
Now, let $0 < \alpha \leq \gamma$ specify a region of the form given by
equation (\ref{ACTR1}). 
\begin{equation}
\Omega \equiv \left\{ \mathbf{x} \in \mathbb{R}^{n_x} | \mathbf{x}^{T} \mathbf{%
P} \mathbf{x} \leq \alpha \right\}
\label{ACTR1}
\end{equation}%
As the input constraints are satisfied in $\Omega
_\gamma$ and $\Omega \subseteq \Omega_\gamma$ (by virtue of $0 < \alpha \leq \gamma$), the system dynamics can be equivalently viewed as an input unconstrained system in the set $\Omega$. 
Consider a vector $\mathbf{\Phi}_{K}(\mathbf{x})$ representing the nonlinearity in the system dynamics defined as 
\begin{equation}
\mathbf{\Phi }_{K}(\mathbf{x})=\mathbf{f}(\mathbf{x, -Kx}) - \mathbf{A}_{K}%
\mathbf{x}  \label{PhyK} 
\end{equation}%
Note for a linear system $\mathbf{\Phi}_{K}(\mathbf{x}) = \mathbf{0}$. 
Consider a Lyapunov candidate defined as 
\begin{equation}
V(\mathbf{x}) = \mathbf{x}^{T} \mathbf{P} \mathbf{x}  \label{Vx}
\end{equation}%
The time derivative of $V(\mathbf{x})$ can be expressed as follows: 
\begin{align}
\frac{dV(\mathbf{x})}{dt}& = \frac{d \mathbf{x}^{T}}{dt} \mathbf{P} \mathbf{x} + \mathbf{x}^{T} \mathbf{P} \frac{d \mathbf{x}}{dt}
\label{Vdot1}
\end{align}%
Substituting from (\ref{PhyK}) into (\ref{Vdot1}), 
\begin{align}
\frac{dV(\mathbf{x})}{dt} = \mathbf{x}^{T} \left( \mathbf{A}_{K}^{T} \mathbf{P} + \mathbf{P} \mathbf{A}_{K} \right) \mathbf{x} + 2 \mathbf{x}^{T} \mathbf{P} \mathbf{\Phi }_{K} \mathbf{(x)}
\label{Vdot2}
\end{align}%
Using equation (\ref{ACLyap}) into (\ref{Vdot2}), 
\begin{align}
\frac{dV(\mathbf{x})}{dt} = -\mathbf{x}^{T} \left( \mathbf{Q}^*+ \Delta \mathbf{Q} \right) \mathbf{x} +2\mathbf{x}^{T} \mathbf{P} \mathbf{\Phi }_{K} \mathbf{(x)} \label{Vdot3}
\end{align}%
Rearranging results in the following equation: 
\begin{align}
\frac{dV(\mathbf{x})}{dt} = -\mathbf{x}^{T} \mathbf{Q}^* \mathbf{x} + \left( -\mathbf{x}^{T} \Delta \mathbf{Q} \mathbf{x} +2\mathbf{x}^{T} \mathbf{P} \mathbf{\Phi }_{K} \mathbf{(x)} \right) \label{Vdot3b}
\end{align}%

There are two possibility to characterize the terminal region. First is a norm based method and second is the inequality based method. \\ 
Method A - Norm based method: 
Taking norm of second term of the equation (\ref{Vdot3}), 
\begin{align}
\mathbf{x}^{T} \mathbf{P} \mathbf{\Phi }_{K} \mathbf{(x)} \leq |P| L_\Phi |\mathbf{x}|^2 \label{Vdot4}
\end{align}%
Since $\mathbf{x}^T \Delta \mathbf{Q} \mathbf{x} \ge \lambda_{min}(\Delta \mathbf{Q})$ and combining (\ref{Vdot4}) into (\ref{Vdot3}), 
\begin{align}
\frac{dV(\mathbf{x})}{dt} \le -\mathbf{x}^{T} \mathbf{Q}^* \mathbf{x} - \left[ \lambda_{min}(\Delta \mathbf{Q}) - 2 |\mathbf{P}| L_\Phi \right] |\mathbf{x}|^2 \label{Vdot5}
\end{align}%
If $\Omega$ is chosen such that 
\begin{align}
\left[ \lambda_{min}(\Delta \mathbf{Q}) - 2 |\mathbf{P}| L_\Phi \right] \leq 0 \label{Vdot6}
\end{align}%
then 
\begin{align}
\frac{dV(\mathbf{x})}{dt} \le -\mathbf{x}^{T} \mathbf{Q}^* \mathbf{x} \label{Vdot7}
\end{align}%

Method B - Inequality based method: 
Rearranging terms from the equation (\ref{Vdot3}), 
\begin{align}
\frac{dV(\mathbf{x})}{dt} = -\mathbf{x}^{T} \mathbf{Q}^* \mathbf{x} + \left( -\mathbf{x}^{T} \Delta \mathbf{Q} \mathbf{x} +2\mathbf{x}^{T} \mathbf{P} \mathbf{\Phi }_{K} \mathbf{(x)} \right)
\label{Vdot11}
\end{align}%
Consider second term of the expression (\ref{Vdot11}), 
\begin{equation}
\mathbf{\Psi} (\mathbf{x}) := \left(  \mathbf{x}^{T} \Delta \mathbf{Q} \mathbf{x} -2 \mathbf{x}^{T} \mathbf{P} \mathbf{\Phi }_{K} \mathbf{(x)} \right) \label{PsiDef}
\end{equation}
Using (\ref{PsiDef}) in (\ref{Vdot11}), 
\begin{align}
\frac{dV(\mathbf{x})}{dt} = -\mathbf{x}^{T} \mathbf{Q}^* \mathbf{x} - \mathbf{\Psi} (\mathbf{x})
\label{Vdot12}
\end{align}%
If $\Omega$ is chosen such that 
\begin{align}
\mathbf{\Psi} (\mathbf{x}) = \left(  \mathbf{x}^{T} \Delta \mathbf{Q} \mathbf{x} -2 \mathbf{x}^{T} \mathbf{P} \mathbf{\Phi }_{K} \mathbf{(x)} \right) \geq 0 \label{Vdot13}
\end{align}%
then 
\begin{align}
\frac{dV(\mathbf{x})}{dt} \le -\mathbf{x}^{T} \mathbf{Q}^* \mathbf{x} \label{Vdot14}
\end{align}%
Equation (\ref{Vdot14}) for inequality based method is identical to equation (\ref{Vdot7}) for norm based method.

Integrating inequality (\ref{Vdot7}) or (\ref{Vdot14}) over the interval, $[t+T_{p}, \infty
),$ it follows that 
\begin{equation}
V(\mathbf{x(}t+T_{p}))\geq \int_{t+T_{p}}^{\infty} \mathbf{x}(\tau)^{T}%
\mathbf{Q}^* \mathbf{x}(\tau) d\tau
\end{equation}%
i.e. inequality (\ref{TRCondition1}) holds true for any $\mathbf{x(}t+T_{p}) \in \Omega$.
\end{proof}

\begin{lemma} \label{lemma2} Suppose that assumptions C1 to C7 are satisfied. Let $\widetilde{\mathbf{W}}_{x} > \mathbf{W}_x$ and $\widetilde{\mathbf{W}}_{u} > \mathbf{W}_u$ be any positive definite matrices. Let matrix $\mathbf{P}_{LQ}$ denote the solution of the following modified Lyapunov equations: 
\begin{equation}
\mathbf{A}_{{K}_{LQ}}^{T}\mathbf{P}_{LQ}+\mathbf{P}_{LQ}\mathbf{A}_{{K}_{LQ}} =-\left( \widetilde{\mathbf{W}}_{x}+\mathbf{K}%
_{LQ}^{T}\widetilde{\mathbf{W}}_{u}\mathbf{K}_{LQ}\right) \label{CARE}
\end{equation}
\begin{equation}
\mathbf{K}_{LQ}\mathbf{=}\left( \widetilde{\mathbf{W}}_{u}\right) ^{-1}%
\mathbf{B}^{T}\mathbf{P}_{LQ}  \label{Kgain}
\end{equation}
where $\mathbf{A}_{{K}_{LQ}} = \mathbf{A}-\mathbf{B} \mathbf{K}_{LQ}$. Then there exists a
constant $\alpha > 0$ which defines an ellipsoid of the form 
\begin{equation}
\Omega \equiv \left\{ \mathbf{x} \in \mathbb{R}^{n_x} | \mathbf{x}^{T} \mathbf{%
P} \mathbf{x} \leq \alpha , -\mathbf{K}_{LQ} \mathbf{x} \in \mathcal{U} \right\}
\label{LQRTR}
\end{equation}%
such that $\Omega$ is an invariant set for the nonlinear system given by (\ref{csystem}) with linear controller $\mathbf{u}(t) = - \mathbf{K}_{LQ} \mathbf{x}(t)$. Additionally, for any $\mathbf{x}(t+T_p) \in \Omega$ the inequality given by (\ref{TRCondition1}) holds true. 
\begin{equation}
\mathbf{z}(t+T_p)^T \mathbf{P} \mathbf{z}(t+T_p) \geq \int_{t+T_{p}}^{\infty }\left\{ 
\begin{array}{c}
\mathbf{z}(\tau)^T \mathbf{W}_{x} \mathbf{z}(\tau)
+ \overline{\mathbf{u}}(\tau) \mathbf{W}_{u} \overline{\mathbf{u}}(\tau)
\end{array} \right\} d\tau  \label{TRCondition1}
\end{equation}
\end{lemma}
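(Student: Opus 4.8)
The plan is to first recognize that the coupled relations (\ref{CARE})--(\ref{Kgain}) are the standard control algebraic Riccati equation in disguise, and then to reuse the Lyapunov argument of Lemma~\ref{lemma1} almost verbatim, with the slack matrices $\widetilde{\mathbf{W}}_x-\mathbf{W}_x$ and $\widetilde{\mathbf{W}}_u-\mathbf{W}_u$ playing the role that $\Delta\mathbf{Q}$ played there. Substituting $\mathbf{B}^T\mathbf{P}_{LQ}=\widetilde{\mathbf{W}}_u\mathbf{K}_{LQ}$, which is (\ref{Kgain}) rearranged, into (\ref{CARE}) and collapsing the cross terms gives $\mathbf{A}^T\mathbf{P}_{LQ}+\mathbf{P}_{LQ}\mathbf{A}-\mathbf{P}_{LQ}\mathbf{B}\widetilde{\mathbf{W}}_u^{-1}\mathbf{B}^T\mathbf{P}_{LQ}+\widetilde{\mathbf{W}}_x=\mathbf{0}$. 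Since $(\mathbf{A},\mathbf{B})$ is stabilizable by C7 and $(\mathbf{A},\widetilde{\mathbf{W}}_x^{1/2})$ is observable (hence detectable) because $\widetilde{\mathbf{W}}_x>0$, standard LQR theory supplies a unique stabilizing solution $\mathbf{P}_{LQ}>0$ with $\mathbf{A}_{{K}_{LQ}}=\mathbf{A}-\mathbf{B}\mathbf{K}_{LQ}$ Hurwitz; this $\mathbf{P}_{LQ}$ is the matrix $\mathbf{P}$ appearing in (\ref{LQRTR}) and (\ref{TRCondition1}). Exactly as in Lemma~\ref{lemma1}, because the origin lies in the interior of $\mathcal{U}$ (C2, C3) and $\mathbf{K}_{LQ}\mathbf{x}$ is continuous, there is $\gamma>0$ with $\Omega_\gamma=\{\mathbf{x}:\mathbf{x}^T\mathbf{P}_{LQ}\mathbf{x}\le\gamma,\ -\mathbf{K}_{LQ}\mathbf{x}\in\mathcal{U}\}$, and for $0<\alpha\le\gamma$ the closed loop may be treated as input-unconstrained on $\Omega=\{\mathbf{x}:\mathbf{x}^T\mathbf{P}_{LQ}\mathbf{x}\le\alpha\}\subseteq\Omega_\gamma$.

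Next I would introduce the nonlinearity $\mathbf{\Phi}_{{K}_{LQ}}(\mathbf{x})=\mathbf{f}(\mathbf{x},-\mathbf{K}_{LQ}\mathbf{x})-\mathbf{A}_{{K}_{LQ}}\mathbf{x}$ and take $V(\mathbf{x})=\mathbf{x}^T\mathbf{P}_{LQ}\mathbf{x}$. Differentiating along $\dot{\mathbf{x}}=\mathbf{A}_{{K}_{LQ}}\mathbf{x}+\mathbf{\Phi}_{{K}_{LQ}}(\mathbf{x})$ and inserting (\ref{CARE}),
\begin{equation*}
\frac{dV(\mathbf{x})}{dt}=-\mathbf{x}^T\left(\widetilde{\mathbf{W}}_x+\mathbf{K}_{LQ}^T\widetilde{\mathbf{W}}_u\mathbf{K}_{LQ}\right)\mathbf{x}+2\mathbf{x}^T\mathbf{P}_{LQ}\mathbf{\Phi}_{{K}_{LQ}}(\mathbf{x}).
\end{equation*}
Writing $\mathbf{Q}^*=\mathbf{W}_x+\mathbf{K}_{LQ}^T\mathbf{W}_u\mathbf{K}_{LQ}$ as in (\ref{Qstar}) and $\Delta\mathbf{Q}_{LQ}:=(\widetilde{\mathbf{W}}_x-\mathbf{W}_x)+\mathbf{K}_{LQ}^T(\widetilde{\mathbf{W}}_u-\mathbf{W}_u)\mathbf{K}_{LQ}$, which is positive definite since $\widetilde{\mathbf{W}}_x>\mathbf{W}_x$ and $\widetilde{\mathbf{W}}_u>\mathbf{W}_u$, this rearranges to $\frac{dV(\mathbf{x})}{dt}=-\mathbf{x}^T\mathbf{Q}^*\mathbf{x}-\left(\mathbf{x}^T\Delta\mathbf{Q}_{LQ}\mathbf{x}-2\mathbf{x}^T\mathbf{P}_{LQ}\mathbf{\Phi}_{{K}_{LQ}}(\mathbf{x})\right)$, which is precisely the form (\ref{Vdot3b})--(\ref{Vdot12}) of Lemma~\ref{lemma1} under the substitutions $\mathbf{P}\mapsto\mathbf{P}_{LQ}$, $\mathbf{K}\mapsto\mathbf{K}_{LQ}$, $\Delta\mathbf{Q}\mapsto\Delta\mathbf{Q}_{LQ}$; from here either the norm-based Method~A or the inequality-based Method~B of that proof carries over unchanged.

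Finally I would construct $\alpha$. By C1 and C2, $\mathbf{\Phi}_{{K}_{LQ}}(\mathbf{0})=\mathbf{0}$ with vanishing Jacobian at the origin, so on any ball $\{\|\mathbf{x}\|\le r\}$ one has $\|\mathbf{\Phi}_{{K}_{LQ}}(\mathbf{x})\|\le L_\Phi(r)\|\mathbf{x}\|$ with $L_\Phi(r)\to0$ as $r\to0$; picking $r$ small enough that $2\|\mathbf{P}_{LQ}\|L_\Phi(r)<\lambda_{\min}(\Delta\mathbf{Q}_{LQ})$ makes $\mathbf{x}^T\Delta\mathbf{Q}_{LQ}\mathbf{x}-2\mathbf{x}^T\mathbf{P}_{LQ}\mathbf{\Phi}_{{K}_{LQ}}(\mathbf{x})\ge0$ on that ball, and then $\alpha=\min\{\gamma,\ r^2\lambda_{\min}(\mathbf{P}_{LQ})\}$ forces $\Omega\subseteq\Omega_\gamma$ and $\Omega\subseteq\{\|\mathbf{x}\|\le r\}$. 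Hence $\frac{dV(\mathbf{x})}{dt}\le-\mathbf{x}^T\mathbf{Q}^*\mathbf{x}\le0$ on $\Omega$, so $\Omega$, being the $\alpha$-sublevel set of $V$, is invariant for (\ref{csystem}) under $\mathbf{u}(t)=-\mathbf{K}_{LQ}\mathbf{x}(t)$ with the input constraint respected along every such trajectory; because $\mathbf{Q}^*>0$ and $\mathbf{A}_{{K}_{LQ}}$ is Hurwitz the trajectory converges to the origin, so $V(\mathbf{x}(\infty))=0$, and integrating $\frac{dV(\mathbf{x})}{dt}\le-\mathbf{x}^T\mathbf{Q}^*\mathbf{x}$ over $[t+T_p,\infty)$ with $\overline{\mathbf{u}}=-\mathbf{K}_{LQ}\mathbf{z}$ yields (\ref{TRCondition1}) with $\mathbf{P}=\mathbf{P}_{LQ}$. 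The only genuinely new work is the algebraic reduction of (\ref{CARE})--(\ref{Kgain}) together with the observation $\Delta\mathbf{Q}_{LQ}>0$; the main obstacle is the bookkeeping around the implicitly defined gain $\mathbf{K}_{LQ}$ — making sure it is the same gain that enters $\mathbf{Q}^*$ and that its fixed-point definition through $\mathbf{P}_{LQ}$ brings in no circularity — which is dispatched by the existence half of the Riccati theory.
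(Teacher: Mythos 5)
Your proposal is correct and follows essentially the same route as the paper: the paper's own proof likewise defines $\mathbf{\Delta W}_{x}=\widetilde{\mathbf{W}}_{x}-\mathbf{W}_{x}$ and $\mathbf{\Delta W}_{u}=\widetilde{\mathbf{W}}_{u}-\mathbf{W}_{u}$, rewrites $\widetilde{\mathbf{W}}_{x}+\mathbf{K}_{LQ}^{T}\widetilde{\mathbf{W}}_{u}\mathbf{K}_{LQ}=\mathbf{Q}^{*}+\Delta\mathbf{Q}$ with $\Delta\mathbf{Q}=\mathbf{\Delta W}_{x}+\mathbf{K}_{LQ}^{T}\mathbf{\Delta W}_{u}\mathbf{K}_{LQ}$, and then defers to the argument of Lemma~\ref{lemma1}. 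Your additional material on the Riccati reduction of (\ref{CARE})--(\ref{Kgain}), the existence of the stabilizing solution $\mathbf{P}_{LQ}$, and the explicit construction of $\alpha$ is sound and merely fills in steps the paper leaves implicit.
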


\begin{proof}
Proof is similar to the proof of Lemma \ref{lemma1} except for minor changes such as $\mathbf{K}$ is replaced by $\mathbf{K}_LQ$, $\mathbf{P}$ is replaced by $\mathbf{P}_LQ$ and remaining changes are shown below: 
Consider a candidate Lyapunov function defined as 
\begin{equation*}
V(\mathbf{x})=\mathbf{x}^{T}\mathbf{P}_{LQ}\mathbf{x}
\end{equation*}
Using equation (\ref{CARE}), the time derivative of $V(\mathbf{x})
$ can be expressed as follows 
\begin{equation}
\frac{dV(\mathbf{x})}{dt}=\mathbf{x}^{T}\left( \mathbf{A}_{K}^{T}\mathbf{P}_{LQ}%
\mathbf{+P}_{LQ}\mathbf{A}_{K}\right) \mathbf{x}+2\mathbf{x}^{T}\mathbf{P}_{LQ}%
\mathbf{\phi (x)}  \label{Vder}
\end{equation}%
Defining matrices, 
\begin{equation}
\mathbf{\Delta W}_{x}\equiv \widetilde{\mathbf{W}}_{x}-\mathbf{W}_{x}>0\text{
and }\mathbf{\Delta W}_{u}\equiv \widetilde{\mathbf{W}}_{u}-\mathbf{W}_{u}>0
\label{deltaW}
\end{equation}%
one can write 
\begin{equation}
\widetilde{\mathbf{W}}_{x}+\mathbf{K}_{LQ}^{T}\widetilde{\mathbf{W}}_{u}%
\mathbf{K}_{LQ}\mathbf{=Q}^{\ast }+\mathbf{\Delta Q}
\end{equation}%
\begin{eqnarray}
\mathbf{Q}^{\ast } &=&\mathbf{W}_{x}+\mathbf{K}_{LQ}^{T}\mathbf{W}_{u}%
\mathbf{K}_{LQ} \\
\mathbf{\Delta Q} &\mathbf{=}&\mathbf{\Delta W}_{x}+\mathbf{K}%
_{LQ}^{T}\Delta \mathbf{W}_{u}\mathbf{K}_{LQ}
\end{eqnarray}%
and the equation (\ref{CARE}) can be re-written as follows 
\begin{equation}
\mathbf{A}_{K}{}^{T}\mathbf{P}_{LQ}+\mathbf{P}_{LQ}\mathbf{A}_{K}=-\left( 
\mathbf{Q}^{\ast }+\Delta \mathbf{Q}\right)   \label{CLyap1}
\end{equation}%
Equation (\ref{Vder}) and equation (\ref{CLyap1}) are combined as follows: 
\begin{equation}
\frac{dV(\mathbf{x})}{dt}=-\mathbf{x}^{T}\mathbf{(\mathbf{Q}^{\ast }+\Delta 
\mathbf{Q})x}+2\mathbf{x}^{T}\mathbf{P}_{LQ}\mathbf{\phi (x)}  \label{Teq}
\end{equation}%
Rearranging results in the following equation: 
\begin{equation}
\frac{dV(\mathbf{x})}{dt}=-\mathbf{x}^{T}\mathbf{Q}^{\ast} \mathbf{x} + \left( -\mathbf{x}^{T} \Delta 
\mathbf{Q} \mathbf{x}+2\mathbf{x}^{T}\mathbf{P}_{LQ}\mathbf{\phi (x)} \right)  \label{Teq2}
\end{equation}%
Equation (\ref{Teq2}) is identical to the equation (\ref{Vdot3b}). Rest of the proof is similar to the proof of Lemma \ref{lemma1}. Both the methods i.e. norm based method and inequality based method are applicable for the LQR based approach as well.  

\end{proof}
Consider the feasibility lemma as follows: 
\begin{lemma} \label{lemma3} Let the assumptions C1-C7 hold true. For the nominal continuous time system, feasibility of continuous time QIH-NMPC formulation problem (\ref{COptimal}) at time $t=0$ implies its feasibility for all $t > 0$.  
\end{lemma}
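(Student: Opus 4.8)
The plan is to use the standard ``tail-extension'' (shifted-control) argument for moving-horizon schemes, with the local linear controller furnishing the newly appearing piece of input at the end of the horizon. Suppose (\ref{COptimal}) is feasible at some time $t$, and let $\overline{\mathbf{u}}^*_{[t,\,t+T_p]}$ be a feasible input with associated predicted trajectory $\mathbf{z}^*(\cdot)$ on $[t,\,t+T_p]$ satisfying $\mathbf{z}^*(t)=\mathbf{x}(t)$, $\overline{\mathbf{u}}^*(\tau)\in\mathcal{U}$, and $\mathbf{z}^*(t+T_p)\in\Omega$. Applying the first portion of this input to the plant on $[t,\,t+\delta]$ and invoking the uniqueness assumption C4, the plant state at $t+\delta$ coincides with $\mathbf{z}^*(t+\delta)$; hence it suffices to exhibit one feasible input for the problem (\ref{COptimal}) posed at $t+\delta$ with initial condition $\mathbf{z}^*(t+\delta)$.

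The candidate I would take is the concatenation
\begin{equation}
\widehat{\mathbf{u}}(\tau)=
\begin{cases}
\overline{\mathbf{u}}^*(\tau), & \tau\in[t+\delta,\,t+T_p],\\
-\mathbf{K}\,\widehat{\mathbf{z}}(\tau), & \tau\in(t+T_p,\,t+\delta+T_p],
\end{cases}
\end{equation}
where $\widehat{\mathbf{z}}(\cdot)$ solves (\ref{csystem}) from $\mathbf{z}^*(t+\delta)$ under $\widehat{\mathbf{u}}$ and $\mathbf{K}$ is the stabilizing gain used to build $\Omega$ in Lemma~\ref{lemma1} (or $\mathbf{K}_{LQ}$ of Lemma~\ref{lemma2}). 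On the first sub-interval $\widehat{\mathbf{z}}$ coincides with $\mathbf{z}^*$ by uniqueness, so $\widehat{\mathbf{z}}(t+T_p)=\mathbf{z}^*(t+T_p)\in\Omega$ and $\widehat{\mathbf{u}}(\tau)=\overline{\mathbf{u}}^*(\tau)\in\mathcal{U}$ there. On the second sub-interval I would invoke the invariance of $\Omega$ under the true nonlinear closed loop $\dot{\mathbf{x}}=\mathbf{f}(\mathbf{x},-\mathbf{Kx})$ established in Lemma~\ref{lemma1}: since $\widehat{\mathbf{z}}(t+T_p)\in\Omega$, the whole arc $\widehat{\mathbf{z}}(\tau)$, $\tau\ge t+T_p$, remains in $\Omega$, so in particular $\widehat{\mathbf{z}}(t+\delta+T_p)\in\Omega$ --- which is exactly the terminal constraint (\ref{TerminalRegion}) for the shifted problem --- and $-\mathbf{K}\widehat{\mathbf{z}}(\tau)\in\mathcal{U}$ because membership of $-\mathbf{Kx}$ in $\mathcal{U}$ is built into the definition (\ref{ACTR}) of $\Omega$. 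Since $\mathcal{U}$ is closed and convex (C3) and the pieces are right-continuous, $\widehat{\mathbf{u}}_{[t+\delta,\,t+\delta+T_p]}$ is an admissible input of the required form, hence feasible for (\ref{COptimal}) at $t+\delta$.

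This shows that feasibility at $t$ implies feasibility at $t+\delta$; since $\delta>0$ is arbitrary, taking $\delta=t$ immediately gives feasibility for every $t>0$ (equivalently, one iterates the argument over the sampling instants). The only step requiring genuine care --- and the main obstacle --- is the terminal-constraint step: it is not enough that $\Omega$ be positively invariant for the linearization $\dot{\mathbf{x}}=\mathbf{A}_{K}\mathbf{x}$; one needs invariance for the actual nonlinear closed loop with the unsaturated local law, which is exactly what Lemma~\ref{lemma1} (and Lemma~\ref{lemma2}) provides through the choice of $\alpha$ making $\Omega\subseteq\Omega_\gamma$ and $\dot V(\mathbf{x})\le-\mathbf{x}^{T}\mathbf{Q}^*\mathbf{x}<0$ on $\Omega\setminus\{\mathbf{0}\}$. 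A secondary bookkeeping point is that the horizon is continuous-time rather than indexed by a discrete counter, but the construction above is uniform in $t$, so no extra work is needed.
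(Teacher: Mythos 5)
Your proof is correct and follows exactly the argument the paper relies on: the paper's ``proof'' of Lemma~\ref{lemma3} is a one-line citation to Lemma~2 of Chen and Allg\"ower \cite{Chen1998}, which is precisely the shifted-input/tail-extension construction you spell out (reuse the remaining feasible input, append the local law $-\mathbf{Kx}$, and invoke the nonlinear invariance of $\Omega$ from Lemma~\ref{lemma1} or~\ref{lemma2} to recover the terminal and input constraints). You have in effect supplied the details the paper omits, including the one point that genuinely matters --- that invariance must hold for the nonlinear closed loop, not merely the linearization.
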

\begin{proof}
Proof is identical to the proof of the lemma 2 from \cite{Chen1998}. 
\end{proof}
Consider the asymptotic stability result as follows: 
\begin{theorem} \label{theorem1} Let a) Assumptions C1-C7 hold true and b) the continuous time NMPC problem is feasible at $t = 0$. The nominal nonlinear system (\ref{csystem}) controlled with NMPC controller is asymptotically stable at the origin. 
\end{theorem}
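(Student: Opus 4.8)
The plan is to use the optimal value function of the NMPC problem as a Lyapunov function for the closed loop and show it strictly decreases. Define
\begin{equation*}
V^{*}(\mathbf{x}(t)) = J\left(\mathbf{x}(t), \overline{\mathbf{u}}^{*}_{[t,t+T_{p}]}\right),
\end{equation*}
which is well defined for every $t \ge 0$ by hypothesis (b) together with the recursive feasibility Lemma~\ref{lemma3}. Assumption C1 and the positive definiteness of $\mathbf{W}_{x}$, $\mathbf{W}_{u}$, $\mathbf{P}$ yield the usual sandwich bounds $\alpha_{1}(|\mathbf{x}|) \le V^{*}(\mathbf{x}) \le \alpha_{2}(|\mathbf{x}|)$ by class-$\mathcal{K}$ functions on a neighborhood of the origin, so $V^{*}$ qualifies as a Lyapunov candidate once the decrease property is in hand: the lower bound comes from the stage term $\mathbf{x}^{T}\mathbf{W}_{x}\mathbf{x}$ and the upper bound from evaluating $J$ at the feasible linear-controller input inside $\Omega$.

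The key construction is a feasible but suboptimal input for the problem posed at a later time $t+\delta$. Let $\overline{\mathbf{u}}^{*}_{[t,t+T_{p}]}$ be optimal at $t$ with predicted trajectory $\mathbf{z}^{*}(\cdot)$, so $\mathbf{z}^{*}(t+T_{p}) \in \Omega$. Define $\widetilde{\mathbf{u}}(\tau) = \overline{\mathbf{u}}^{*}(\tau)$ for $\tau \in [t+\delta, t+T_{p}]$, and $\widetilde{\mathbf{u}}(\tau) = -\mathbf{K}\mathbf{z}(\tau)$ (or $-\mathbf{K}_{LQ}\mathbf{z}(\tau)$ if the LQR design of Lemma~\ref{lemma2} is used) for $\tau \in [t+T_{p}, t+\delta+T_{p}]$. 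By the invariance of $\Omega$ under the local linear controller established in Lemma~\ref{lemma1} (respectively Lemma~\ref{lemma2}), the state driven by $\widetilde{\mathbf{u}}$ remains in $\Omega$ on $[t+T_{p}, t+\delta+T_{p}]$, hence $\widetilde{\mathbf{u}}(\tau) \in \mathcal{U}$ there and the terminal constraint \eqref{TerminalRegion} holds at $t+\delta$; so $\widetilde{\mathbf{u}}$ is feasible at $t+\delta$ and $V^{*}(\mathbf{x}(t+\delta)) \le J(\mathbf{x}(t+\delta), \widetilde{\mathbf{u}}_{[t+\delta,t+\delta+T_{p}]})$.

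Next I would compare this suboptimal cost with $V^{*}(\mathbf{x}(t))$. By assumption C6 (nominal model) the closed-loop state coincides with $\mathbf{z}^{*}$ on $[t,t+\delta]$, so the difference of costs loses the stage cost on $[t,t+\delta]$, gains the stage cost along the $-\mathbf{K}\mathbf{z}$ trajectory on $[t+T_{p}, t+\delta+T_{p}]$, and replaces the terminal penalty $\mathbf{z}^{*}(t+T_{p})^{T}\mathbf{P}\mathbf{z}^{*}(t+T_{p})$ by $\mathbf{z}(t+\delta+T_{p})^{T}\mathbf{P}\mathbf{z}(t+\delta+T_{p})$. The terminal inequality of Lemma~\ref{lemma1} (resp. Lemma~\ref{lemma2}) — the terminal penalty dominates the infinite-horizon tail cost under the local controller — exactly absorbs the added stage cost against the decrease in terminal penalty, leaving
\begin{equation*}
V^{*}(\mathbf{x}(t+\delta)) - V^{*}(\mathbf{x}(t)) \le -\int_{t}^{t+\delta}\left\{ \mathbf{x}(\tau)^{T}\mathbf{W}_{x}\mathbf{x}(\tau) + \mathbf{u}(\tau)^{T}\mathbf{W}_{u}\mathbf{u}(\tau)\right\} d\tau .
\end{equation*}
Dividing by $\delta$ and letting $\delta \to 0$ gives $\frac{dV^{*}(\mathbf{x}(t))}{dt} \le -\mathbf{x}(t)^{T}\mathbf{W}_{x}\mathbf{x}(t) < 0$ for $\mathbf{x}(t) \ne \mathbf{0}$; the sandwich bounds then give Lyapunov stability, while integrating the inequality forces $\mathbf{x}(t)^{T}\mathbf{W}_{x}\mathbf{x}(t) \to 0$, hence $\mathbf{x}(t) \to \mathbf{0}$, so the origin is nominally asymptotically stable with region of attraction $\mathcal{X}_{T_{p}}$.

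The step I expect to be the main obstacle is not the Lyapunov bookkeeping but justifying that the differential inequality is meaningful: one needs $V^{*}$ to be continuous (or at least that the finite-difference bound above holds for all small $\delta$), and one must verify that the concatenated $\widetilde{\mathbf{u}}$ is admissible in the piecewise-right-continuous sense demanded by C4, which is where twice continuous differentiability of $\mathbf{f}$ (C1) and boundedness of the linear feedback on the compact set $\Omega$ enter. This is essentially the continuous-time argument of Chen and Allg\"ower \cite{Chen1998}, so I would invoke Lemma~\ref{lemma3} for feasibility and Lemma~\ref{lemma1}/\ref{lemma2} for the two defining properties of $\Omega$ and $\mathbf{P}$, and then carry out the cost-difference computation explicitly.
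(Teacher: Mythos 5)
Your proposal is correct in substance, but it is not the argument the paper gives, and the difference is worth noting. The paper's proof of Theorem~\ref{theorem1} never touches the optimal value function: it takes $V(\mathbf{x})=\mathbf{x}^{T}\mathbf{P}\mathbf{x}$ itself as the Lyapunov function and invokes the decrease inequality (\ref{Vdot7})/(\ref{Vdot14}) from Lemma~\ref{lemma1} to conclude $\frac{dV(\mathbf{x})}{dt}\le-\mathbf{x}^{T}\mathbf{Q}^{*}\mathbf{x}<0$ ``under the NMPC controller.'' That inequality, however, was derived for the closed loop under the auxiliary linear feedback $\mathbf{u}=-\mathbf{K}\mathbf{x}$ and only for $\mathbf{x}\in\Omega$, so the paper's argument as written establishes a local property inside the terminal region under the local controller; extending the decrease to the NMPC-controlled trajectory and to initial conditions in $\mathcal{X}_{T_p}\setminus\Omega$ is precisely what your construction supplies. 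Your route --- value function as Lyapunov candidate, recursive feasibility via Lemma~\ref{lemma3}, the shifted input concatenated with the local controller, and the terminal condition absorbing the appended tail cost --- is the classical Chen--Allg\"ower argument and buys the stronger conclusion of asymptotic stability on all of $\mathcal{X}_{T_p}$, at the price of the technical care you already flagged (continuity/differentiability of $V^{*}$, admissibility of the concatenated input). Two small refinements if you write it out: the inequality that actually cancels the appended stage cost against the change in terminal penalty is the differential form $\frac{d}{d\tau}\left(\mathbf{z}^{T}\mathbf{P}\mathbf{z}\right)\le-\left(\mathbf{z}^{T}\mathbf{W}_{x}\mathbf{z}+\overline{\mathbf{u}}^{T}\mathbf{W}_{u}\overline{\mathbf{u}}\right)$ along the $-\mathbf{K}\mathbf{z}$ trajectory in $\Omega$ (i.e.\ (\ref{Vdot7})/(\ref{Vdot14}) with $\mathbf{Q}^{*}=\mathbf{W}_{x}+\mathbf{K}^{T}\mathbf{W}_{u}\mathbf{K}$) integrated over $[t+T_{p},\,t+\delta+T_{p}]$, not the infinite-horizon form (\ref{TRCondition1}) you quote, which is merely its consequence; and the final step ``integrating forces $\mathbf{x}^{T}\mathbf{W}_{x}\mathbf{x}\to 0$'' needs a Barbalat-type uniform-continuity argument, since the sandwich bounds on $V^{*}$ are only available near the origin.
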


\begin{proof}
From equation (\ref{Vx}) from the lemma \ref{lemma1} or \ref{lemma2}, 
consider the Lyapunov candidate function 
\begin{equation}
V(\mathbf{x}) = \mathbf{x}^{T} \mathbf{P} \mathbf{x} \label{Vx1}
\end{equation}%
Consider the following three properties \cite{Khalil2002}: 
\begin{itemize}
\item $V(\mathbf{0}) = (\mathbf{0}^{T}) \mathbf{P} (\mathbf{0}) = 0$. 
\item Since $\mathbf{P}$ is a positive definite matrix, $V(\mathbf{x}) = \mathbf{x}^{T} \mathbf{P} \mathbf{x} > 0$ for all $\mathbf{x} \neq \mathbf{0}$. 
\item Using (\ref{Vdot7}) or (\ref{Vdot14}) and $\mathbf{Q}^* > 0$ implies 
\begin{align}
\frac{dV(\mathbf{x})}{dt} \le -\mathbf{x}^{T} \mathbf{Q}^* \mathbf{x} < 0 \label{Vx2}
\end{align}%
\end{itemize}

Thus, the candidate function $V(\mathbf{x})$ is a Lyapunov function for the
nonlinear system for $\mathbf{x} \in \Omega$ under NMPC controller. 
Hence, the closed loop system is asymptotically stable at the origin. 
\end{proof}
Note $\mathbf{K}$ is to be read as $\mathbf{K}_{LQ}$ for linear gain matrix and $\mathbf{P}$ is to be read as $\mathbf{P}_{LQ}$ for terminal penalty matrix for the subsequent sections for the application of LQR based approach. Notation is simplified for readability. 

\section{Terminal Region Characterization} 
Lemma \ref{lemma1} or Lemma \ref{lemma2} gave conditions for explicit characterization of the terminal region. It is possible to numerically compute the terminal region and subsequently implement the QIH-NMPC controller. 
\subsection{Steps for the Characterization of the Terminal Region}
Steps for characterization of the terminal region using arbitrary controller based approach are given below: 

\begin{description}
\item[S1] Computation of Upper Bound Set: \\ 
Compute the largest value of $\gamma$ such that inputs constraints are satisfied in the set $\Omega_\gamma$. 
\begin{equation}
\Omega_\gamma \equiv \left\{ \mathbf{x} \in \mathbb{R}^{n_x} | \mathbf{x}^{T}%
\mathbf{P} \mathbf{x} \leq \gamma, - \mathbf{Kx} \in \mathcal{U}%
\right\} \label{Omegagamma1}
\end{equation}%
This can be formulated as a simple Quadratic Programming (QP) problem if the constraints are defined by upper bound and lower bound on each of the input signal. Typically the set $\Omega_\gamma$ would be tangential to at least one of the input constraint. 
\item[S2a] Computation of the Terminal Region using norm based method: \\ 
Compute the largest $\alpha \in (0, \gamma]$ such that 
\begin{align}
L_\Phi \leq L_\Phi^* = \frac{\lambda_{min}(\Delta \mathbf{Q})}{2 |\mathbf{P}|} \label{TRCompute1}
\end{align}%
where 
\begin{align}
L_\Phi = \begin{array}{c} 
\max \\ 
\mathbf{x} \in \Omega%
\end{array}%
\frac{|\mathbf{\Phi}_K (\mathbf{x})|}{|\mathbf{x}|} \label{TRCompute2}
\end{align}
This is identical to the method given by Rajhans et al. in \cite{Rajhans2016} for the arbitrary controller based approach. 
\item[S2b] Computation of the Terminal Region using inequality based method: \\ 
Compute the largest $\alpha \in (0, \gamma]$ such that 
\begin{align}
\left[ 
\begin{array}{c} 
\min \\ 
\mathbf{x}(k) \in \Omega%
\end{array}%
\mathbf{\Psi}(\mathbf{x}) \right] = 0 \label{TRCompute3}
\end{align}
The condition given by (\ref{TRCompute3}) ensures that $\mathbf{\Psi}(\mathbf{x}) > 0$ for all $\mathbf{x} \in \Omega$, which is the necessary condition to further establish the nominal asymptotic stability. 
\end{description} 
It may be noted that the steps S1 and S2a results in a conservative terminal region and steps S1 and S2b result in larger terminal region. The step S2b is implemented as follows: \\
Initially $\alpha = \gamma$ and condition (\ref{Vdot13}) i.e. $(\mathbf{\Psi}(\mathbf{x}) \geq 0)$ is checked. If (\ref{Vdot13}) is true, then $\alpha = \gamma$. If (\ref{Vdot13}) is false i.e. $(\mathbf{\Psi}(\mathbf{x}) < 0)$ for at least one $\mathbf{x} \in \Omega$, then the value of $\alpha$ is further reduced by a multiplicative factor $\beta < 1$ and $\beta \approx 1$. The process continues until condition (\ref{Vdot13}) is satisfied.

Terminal region shape changes according to the computed $\mathbf{P}$ matrix and its size changes according to the value of $\alpha$. In order to compare the size of the terminal regions, area is computed for state dimension of $2$ as 
\begin{align}
A_2 = \frac{\pi \alpha}{\sqrt{det(\mathbf{P})}} \label{Area}
\end{align}

\section{CSTR Case Study} 
Effectiveness of the proposed approaches for the terminal region characterization and its applicability to NMPC continue time simulations is demonstrated using the benchmark CSTR case study. 

\subsection{Choice of Tuning Matrices}
According to the design of the arbitrary controller based approach, the gain matrix $\mathbf{K}$ can be any arbitrary stabilizing linear controller. However, in order to simply the computations, simulation results are presented with the following choice. Controller gain $\mathbf{K}$ is the steady state solution of the simultaneous equations (\ref{LQRP}) and (\ref{LQRK}).  
\begin{equation}
\mathbf{A}^T \mathbf{P} + \mathbf{P} \mathbf{A} = - \mathbf{W}_x + \mathbf{P} \mathbf{B} \left( \mathbf{W}_u \right)^{-1} \mathbf{B}^T \mathbf{P} \label{LQRP} 
\end{equation}%
\begin{equation}
\mathbf{K} = \left( \mathbf{W}_{u}\right) ^{-1} \mathbf{B}^{T}\mathbf{P}  \label{LQRK}
\end{equation}%

The tuning matrix $\Delta \mathbf{Q}$ is any positive definite matrix. However, in order to simply and structure the computations of the terminal region, following parameterization is carried out:   
\begin{equation}
\Delta \mathbf{Q} = \widetilde{\mathbf{W}}_{x} + \mathbf{K}^T \widetilde{\mathbf{W}}_{u} \mathbf{K}  \label{ACDQ}
\end{equation}%
In order to further simplify the numerical computation of the terminal region, additional parameterization is carried out as follows: 
\begin{equation}
\widetilde{\mathbf{W}}_{x} = \rho_x \mathbf{W}_{x} \text{ and } \widetilde{\mathbf{W}}_{u} = \rho_u \mathbf{W}_{u} \label{TuningM}
\end{equation}
Note that it is sufficient to have $\widetilde{\mathbf{W}}_{x} > \mathbf{W}_{x}$ or $\widetilde{\mathbf{W}}_{u} > \mathbf{W}_{u}$ to satisfy $\Delta \mathbf{Q} > 0$, however, usually both $\widetilde{\mathbf{W}}_{x} > \mathbf{W}_{x}$ and $\widetilde{\mathbf{W}}_{u} > \mathbf{W}_{u}$ is preferred in practice.  
Using the matrices (\ref{TuningM}) into (\ref{ACDQ}), 
\begin{equation}
\Delta \mathbf{Q} = \rho_x \mathbf{W}_{x} + \rho_u \mathbf{K}^T \mathbf{W}_{u} \mathbf{K}  \label{ACDQ1}
\end{equation}
where $\rho_x > 0$ and $\rho_u > 0$ are the tuning scalars. 
Rajhans et al. presented terminal region characterization with only single tuning parameter $\rho_x > 0$ \cite{Rajhans2016}. However, in the current work, two parameters $\rho_x > 0$ and $\rho_u > 0$ are varied for obtaining the terminal region. 

Chen and Allg\"ower presents both the norm based method and inequality based method \cite{Chen1998}. It is reported that inequality based method results in larger terminal region when compared to the norm based method. Approach by Rajhans et al. in \cite{Rajhans2016} makes use of norm based method, however, the proposed approach in this work makes use of inequality based method which is inherently less conservative.

Efficacy of having two tuning parameters is efficiently demonstrated using the case study in the next sub-sections. In the case study, table \ref{CSTR_Iter} steps in which parameters are varied in order to obtain a significantly larger terminal regions. For approach by Chen and Allg\"ower's \cite{Chen1998}, there is a single constant scalar tuning parameter $\kappa$. In the case of arbitrary controller based approach. there are two iterations. In the first iterations, tuning parameter $\rho_x$ is varied keeping $\rho_u$ constant. In the second iteration, value of $\rho_x = \rho_x^*$ where $\rho_x^*$ is the value of $\rho_x$ resulting in maximum terminal region area in the first iteration. Arbitrary controller based approach with single tuning parameter $\rho_x$ is given by Rajhans et al. in \cite{Rajhans2016}.

\begin{table}[tbph]
\caption{Terminal Region Computation Iteration Steps}
\label{CSTR_Iter}\centering%
\begin{tabular}{|l|c|c|c|c|c|}
\hline
$\text{Approach}$ & Iteration & $\begin{array}{c}
\text{Tuning} \\ \text{parameters} 
\end{array}$ & $\begin{array}{c}
\text{Constant} \\ \text{parameters} 
\end{array}$ & $\begin{array}{c}
\text{Initial} \\ \text{value} 
\end{array}$ 
 & $\begin{array}{c}
\text{Increasing} \\ \text{parameter} 
\end{array}$ \\ \hline
Chen and Allg\"ower's \cite{Chen1998} & 1 & $\kappa$ & $\kappa$ & $-0.95*\left[ \lambda _{\max }\left( \mathbf{A-BK} \right) \right]$ & - \\ \hline 
Arbitrary controller based \cite{Rajhans2016} & 1 & $\rho_x$ & $\rho_{u} = 0$ & $\rho_x = 0.1$ & $\rho_x$ \\ \hline 
Arbitrary controller based & 2 & $\rho_x, \rho_u$ & $\rho_{x}^*$ & $\rho_u = 0.1$ & $\rho_u$ \\ \hline 
LQR based & 1 & $\rho_x$ & $\rho_{u} = 1$ & $\rho_x = 1.1$ & $\rho_x$ \\ \hline 
LQR based & 2 & $\rho_x, \rho_u$ & $\rho_{x}^*$ & $\rho_u = 1.1$ & $\rho_u$ \\ \hline
\end{tabular}%
\end{table}


\subsection{CSTR System Details}
Consider Continuous Stirred Tank Reactor (CSTR) initially given by Hicks and Ray\ \cite%
{Hicks1971} and later used by Huang et al. \cite{Huang2012}. The system dynamics equations are: 
\begin{align}
\frac{dz_{c}}{dt}& =\frac{(1-z_{c})}{m_{2}}-k_{0}z_{c} e^{(-E_{a}/z_{T})}
\label{cstrmodel1} \\
\frac{dz_{T}}{dt}=& \frac{(z_{T}^{f}-z_{T})}{m_{2}}%
+k_{0}z_{c} e^{(-E_{a}/z_{T})}-\alpha _{0}m_{1}(z_{T}-z_{T}^{CW})
\label{cstrmodel2}
\end{align}%
where $z_{c}$ and $z_{T}$ represent dimensionless concentration and
dimensionless temperature, respectively. Control inputs are cooling
water flow rate $m_{1}$ and inverse of the dilution rate $m_{2}$. 

\subsection{Nominal Parameters and Linearization}
Nominal values of the parameters are given in the Table \ref{CSTR_parameters}. 
\begin{table}[tbph]
\caption{CSTR System: Nominal Parameters}
\label{CSTR_parameters}\centering%
\begin{tabular}{|c|c|}
\hline
Variable & Nominal Value \\ \hline\hline
$z_{T}^{CW}$ & $0.38$ \\ \hline
$z_{T}^{f}$ & $0.395$ \\ \hline
$E_{a}$ & $5$ \\ \hline
$\alpha _{0}$ & $1.95\times 10^{-4}$ \\ \hline
$k_{0}$ & $300$ \\ \hline
\end{tabular}%
\end{table}
To improve numerical stability of the optimization routine, the inputs $(m_{1},m_{2})$ appearing in the system dynamics are scaled as $u_{1}=m_{1}/600$ and $u_{2}=m_{2}/40$. Operating point is given as, 
\begin{equation}
\pmb{X}_{s}=\left[\begin{array}{c} 0.6416 \\ 0.5387 \end{array} \right] 
\end{equation}
\begin{equation}
\pmb{U}%
_{s}=\left[ \begin{array}{c} 0.5833 \\ 0.5000 \end{array} \right]  \label{cstrdiscss}
\end{equation}%
The input constraints are given as follows: 
\begin{align}
\mathcal{U}=\left\{u_{1},u_{2}\in \mathbb{R} |  -0.4167\leq u_{1}\leq
0.4167, -0.4750\leq u_{2}\leq 0.5  \right\}
\end{align}%
Jacobian linearization of the continuous time nonlinear system at $(\mathbf{X}_{s},\mathbf{U}_{s})$ yields: 
\begin{equation}
\mathbf{A}=%
\begin{bmatrix}
   -0.0779 &  -0.3088 \\ 
    0.0279 &   0.1905 
\end{bmatrix}%
~\text{and }
\mathbf{B}=%
\begin{bmatrix}
         0 &  -0.0358 \\ 
   -0.0184 &   0.0144 
\end{bmatrix}%
\end{equation}%
Eigenvalues of the open loop continuous time dynamics are $(-0.0406, ~0.1532)$, which is unstable (i.e. negative real part).
\subsection{NMPC Controller Design}
Stage cost matrices for the MPC formulation are given as follows: 
\begin{equation}
\pmb{W}_{x}=\left[ 
\begin{array}{cc}
10 & 0 \\ 
0 & 2%
\end{array}%
\right] 
\end{equation}
\begin{equation}
\pmb{W}_{u}=\left[ 
\begin{array}{cc}
1 & 0 \\ 
0 & 0.5%
\end{array}%
\right] 
\end{equation}%
Since concentration of the mixture is more crucial compared to the temperature of the reactor, hence, the weight for the first state (concentration) is chosen $5$ times larger when compared to weight of the second state (temperature).  
Sampling interval of $T=1~unit$ is used.

\subsection{Comparison of the Terminal Regions for CSTR System}
Linear gain matrix and terminal penalty matrix obtained using Chen and Allg\"ower's \cite{Chen1998} approach ($\kappa = 0.1059$) is given 
as follows: 
\begin{align}
\mathbf{K}_{CA}=%
\begin{bmatrix}
   -1.6118 &  -10.7187 \\ 
   -2.1094 &  10.5029 
\end{bmatrix}%
,~ \mathbf{P}_{CA}=10^{3}\times 
\begin{bmatrix}
    8.4569 &   5.8384 \\ 
    5.8384 &   4.8968 
\end{bmatrix}
\end{align}
Linear gain matrix and terminal penalty matrix obtained using Arbitrary
Controller based approach ($\rho _{x}=50, \rho _{u}=20$) is given
as follows: 
\begin{align}
\mathbf{K}=%
\begin{bmatrix}
   -1.6118 &  -10.7187 \\ 
   -2.1094 &  10.5029 
\end{bmatrix}%
,~ \mathbf{P}=10^{4}\times 
\begin{bmatrix}
    0.3492 &    0.3406 \\ 
    0.3406 &   1.2265 
\end{bmatrix}
\label{Kgain6}
\end{align}
Linear gain matrix and terminal penalty matrix obtained using LQR based
approach ($\rho _{x}=50,\rho _{u}=1500$) are given as follows: 
\begin{align}
\mathbf{L}_{LQ}=%
\begin{bmatrix}
   -1.2963 & -10.4475 \\ 
    1.1335 &  11.3084 
\end{bmatrix}%
,~ \mathbf{P}_{LQ}=10^{5}\times 
\begin{bmatrix}
    0.1877 &   1.0578 \\ 
    1.0578 &   8.5254 
\end{bmatrix}
\label{Kgain5}
\end{align}%
%
%


Table \ref{CSTR_TR_CA_AC_LQR} compares areas of the largest terminal
regions obtained using Chen and Allg\"ower's \cite{Chen1998} (as CA), Arbitrary Controller (as AC) based approach and LQR based approach (as LQ). It can be observed that the terminal region obtained using arbitrary controller based approach is approximately 45 times larger than the area of the terminal region obtained using the approach by Chen and Allg\"ower's \cite{Chen1998}. Additionally, the terminal region obtained using LQR based approach is approximately 412 times and 9 times larger than the area of the terminal region obtained using the approach by Chen and Allg\"ower's \cite{Chen1998} and arbitrary controller based approach respectively.  It can be observed that arbitrary controller based approach using two tuning parameters $\rho_x, \rho_u$ result approximately $4.1$ times increase in area of the terminal region when compared to the arbitrary controller based approach using a single tuning parameter $\rho_x$ as given in \cite{Rajhans2016}. 


\begin{table}[tbph]
\caption{CSTR system: comparison of maximum terminal
regions }
\label{CSTR_TR_CA_AC_LQR}\centering%
\begin{tabular}{|l|c|c|c|c|}
\hline
$\text{Approach}$ & $%
\text{Degrees of freedom} 
$ & $\gamma$ & $\alpha $ & $\text{Area of }\Omega $ \\ \hline
Chen and Allg\"ower's \cite{Chen1998} & $\kappa = 0.1059$ & $1.3620$ & $0.1282$ & $%
1.4880 \times 10^{-4}$ \\ \hline
Arbitrary controller based \cite{Rajhans2016} & $\rho _{x}=50, \rho _{u}=0$ & $1.3563$ & $0.6467$ & $0.0016$ \\ \hline 
Arbitrary controller based & $\rho _{x}^*=50, \rho _{u}=20$ & $11.9270$ & $11.9270$ & $0.0067$ \\ \hline 
LQR based & $\rho _{x}=50,\rho _{u}=1$ & $0.0940$ & $0.0435$ & $2.225 \times 10^{-4}$ \\ \hline
LQR based & $\rho _{x}^*=50,\rho _{u}=1500$ & $1.3560 \times 10^{3}$ & $1.3560 \times 10^{3}$ & 0.0614 \\ \hline
\end{tabular}%
\end{table}

\section{NMPC Demonstration Results}
In order to formally demonstrate the efficacy of the larger terminal regions on the MPC, continuous time simulations are carried out using the largest terminal region which is obtained using the novel LQR based approach with two tuning parameters $\rho_x, \rho_u$. Three initial conditions given in the deviation variables and computed in different directions to affirm that the result is certain and not by chance, are given as follows: 
\begin{equation}
\mathbf{x}%
_{P_{1}}(0)= \left[ \begin{array}{c} 
-0.001 \\  -0.050
\end{array} \right]
\text{,~}
\mathbf{x}%
_{P_{2}}(0)= \left[ \begin{array}{c} 
-0.625 \\  0.380
\end{array} \right]
\text{,~}
\mathbf{x}%
_{P_{3}}(0)= \left[ \begin{array}{c} 
0.400 \\  0.230
\end{array} \right]
\label{Hicks2S_IC1} 
\end{equation} 
Note, in the actual variable terms, the initial conditions for the system become 
\begin{equation}
\mathbf{X}_{Pi}(0) = \mathbf{X}_s + \mathbf{x}_{Pi}(0) \text{ for } i = 1, 2, 3  \label{Hicks2S_IC2} 
\end{equation}

Figure \ref{Hicks2S_Cont_NMPC_result1_actual_states} displays plot of states in actual variables for the MPC simulation. It can be observed that all the states converge to the steady state operating point. 

\begin{figure}[!ht]
\centerline{\includegraphics[width=\columnwidth]{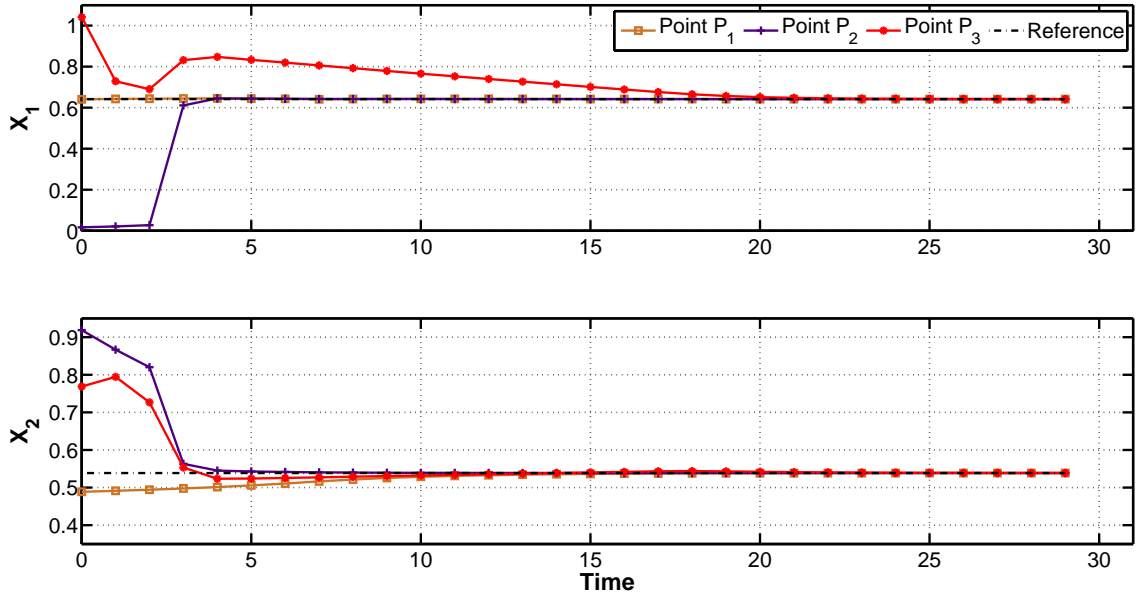}}
\caption{CSTR System: Plot of states in actual variables} 
\label{Hicks2S_Cont_NMPC_result1_actual_states}
\end{figure}

Figure \ref{Hicks2S_Cont_NMPC_result1_deviation_states} shows trajectories of the states in the deviation variables for the MPC simulation. It can be observed that all the states converge to the origin. 

\begin{figure}[!ht]
\centerline{\includegraphics[width=\columnwidth]{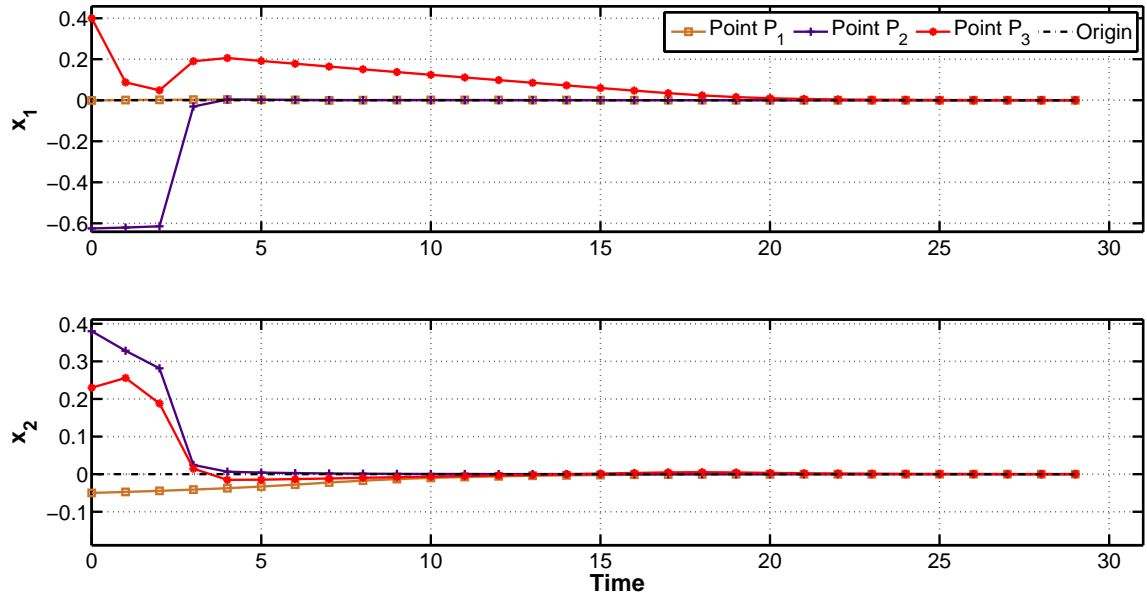}}
\caption{CSTR System: Plot of states in deviation variables} 
\label{Hicks2S_Cont_NMPC_result1_deviation_states}
\end{figure}

Figure \ref{Hicks2S_Cont_NMPC_result1_inputs} shows the plot of the control inputs (as voltage in V). It can be seen that both the control inputs remained inside the limits indicating the feasibility. 
Both the control inputs converge to the  steady state value after sufficient time has elapsed. 
\begin{figure}[!ht]
\centerline{\includegraphics[width=\columnwidth]{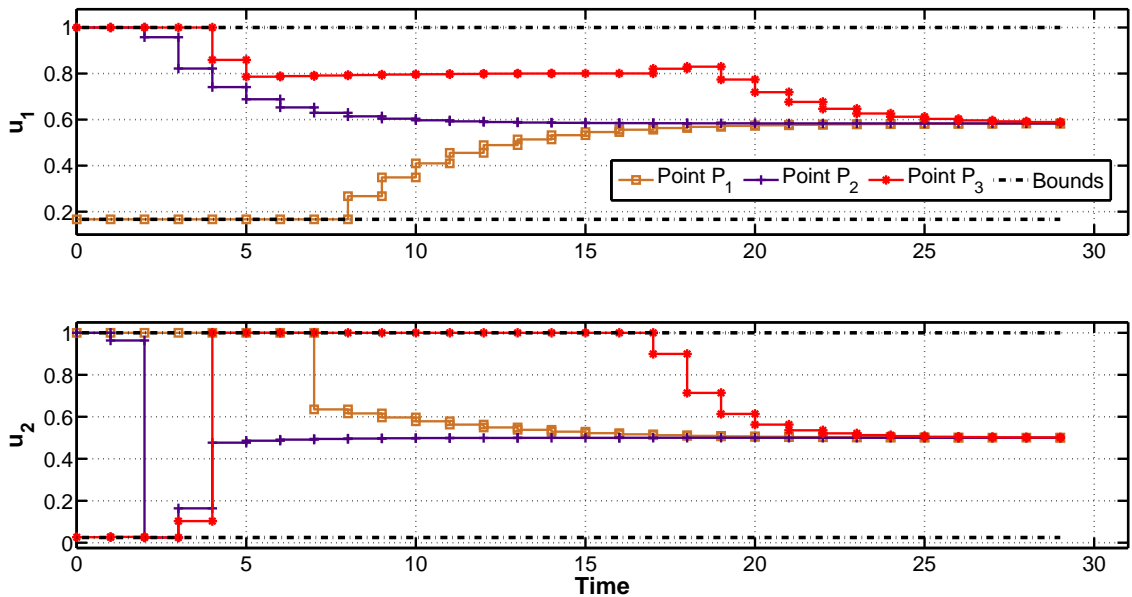}}
\caption{CSTR System: Plot of control inputs} 
\label{Hicks2S_Cont_NMPC_result1_inputs}
\end{figure}

Figure \ref{Hicks2S_Cont_NMPC_result1_xPx} depicts initial condition value i.e. $log_{10} \left[\mathbf{x}(t)^T \mathbf{P} \mathbf{x}(t) \right]$ value along with a limit $log_{10} \alpha$, which represents the terminal set boundary. Initially, values are larger than $\log_{10}\alpha $, which indicates that the initial condition is outside the terminal region. Subsequently, value (in log scale) keeps becoming quiet small indicating that the states converge to the origin i.e. $\mathbf{x}(t) \to \mathbf{0}$ as $t \to \infty$. Logarithmic scale is used because the range of values is higher. 
\begin{figure}[!ht]
\centerline{\includegraphics[width=\columnwidth]{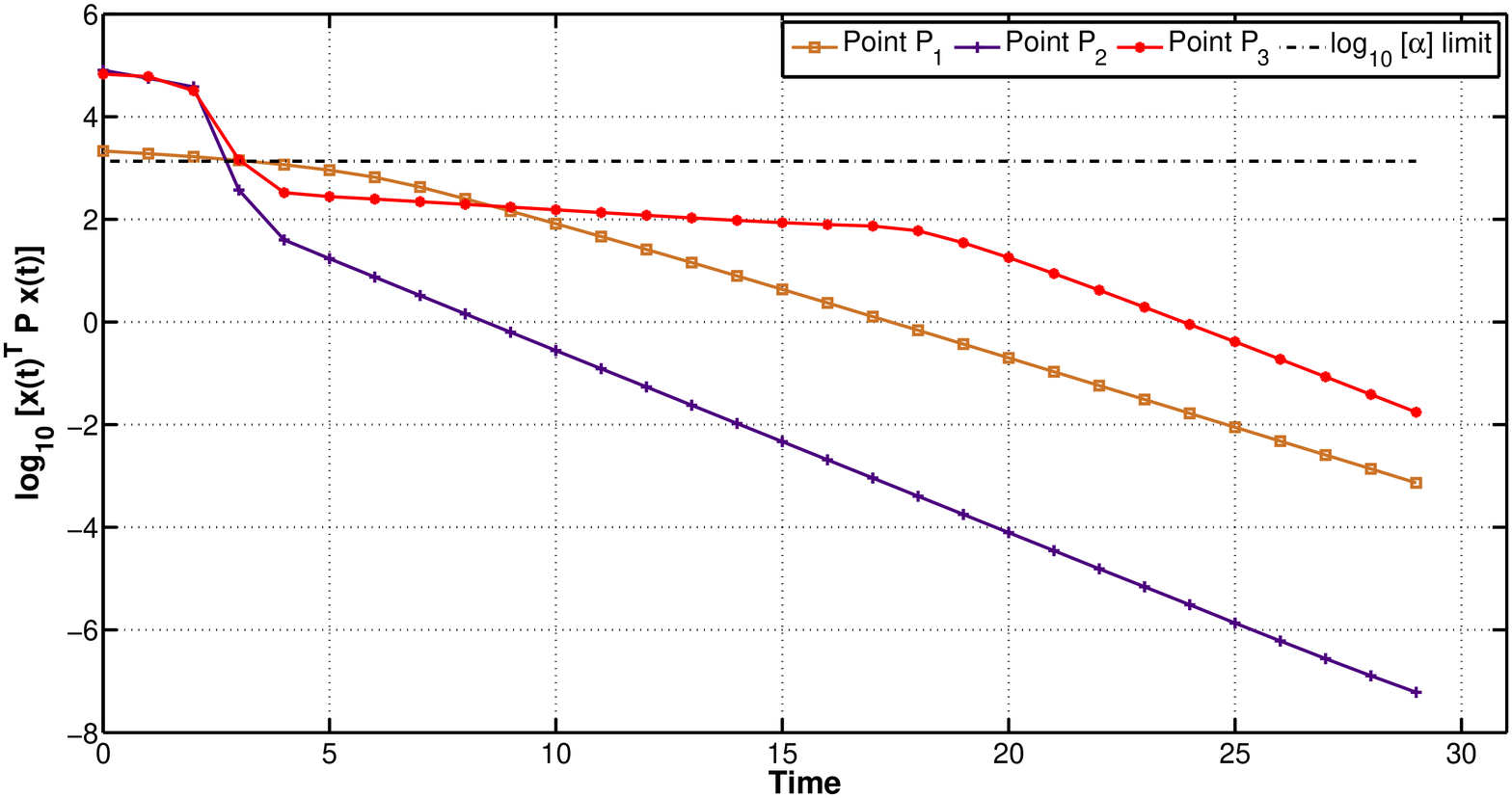}}
\caption{CSTR System: Plot of states in actual variables} 
\label{Hicks2S_Cont_NMPC_result1_xPx}
\end{figure}

Figure \ref{Hicks2S_Cont_NMPC_result1_TCV} shows the terminal constraint value i.e. value of $\mathbf{z}(t+T_p)^{T}\mathbf{P}\mathbf{z}(t+T_p)$ along with its limit $\alpha$. Value of $\alpha$ corresponds to the terminal region boundary. Value always remains below $\alpha$ indicating that the predicted state at the end of the horizon time i.e. $\mathbf{z}(t+T_p)$ is always inside the terminal region i.e. terminal inequality constraint is satisfied every time. 

\begin{figure}[!ht]
\centerline{\includegraphics[width=\columnwidth]{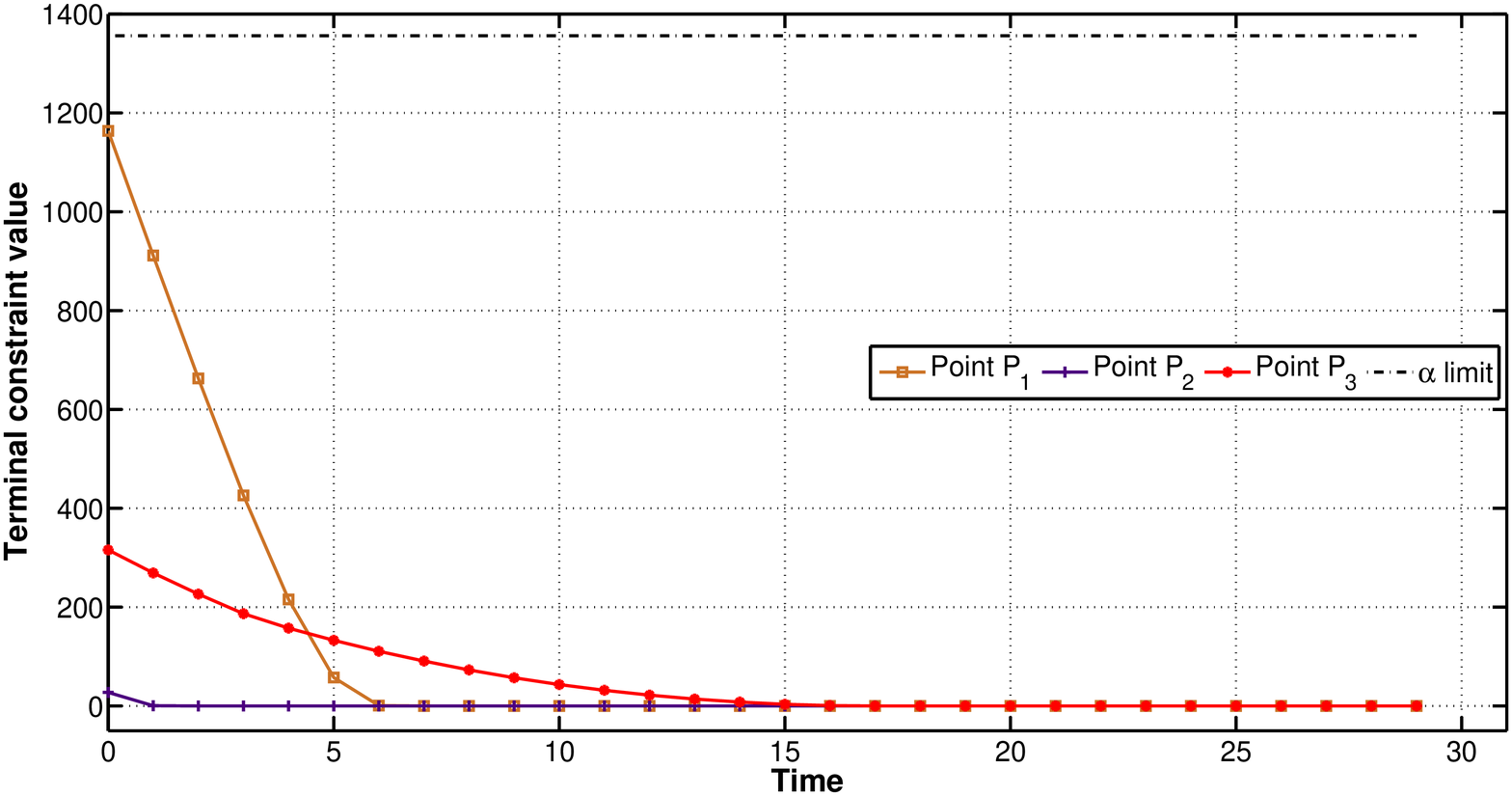}}
\caption{CSTR System: Plot of states in actual variables} 
\label{Hicks2S_Cont_NMPC_result1_TCV}
\end{figure}

Figure \ref{Hicks2S_Cont_NMPC_result1_xx_log} depicts value i.e. $log_{10} \left[\mathbf{x}(t)^T \mathbf{x}(t) \right] = log_{10} |\mathbf{x}(t)|^2$ value. For trajectories starting from the initial conditions $P_3$, value increases slightly at $t=4$, which clearly motivates the need for developing Lyapunov stability theory. However, it can be noted that during the entire trajectory value of the Lyapunov function $\left[\mathbf{x}(t)^T \mathbf{P} \mathbf{x}(t) \right]$ as shown in the figure \ref{Hicks2S_Cont_NMPC_result1_xPx} is continuous decreasing every time. This effectively illustrates the requirement of the presence of the matrix $\mathbf{P}$ in the Lyapunov function. 

\begin{figure}[!ht]
\centerline{\includegraphics[width=\columnwidth]{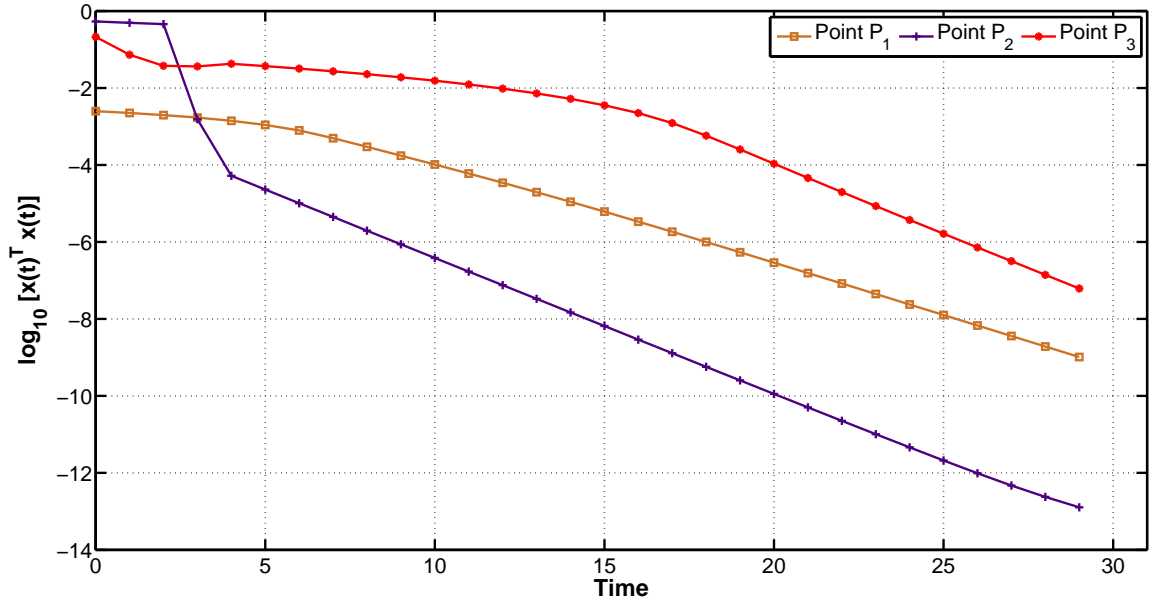}}
\caption{CSTR System: Plot of states in actual variables} 
\label{Hicks2S_Cont_NMPC_result1_xx_log}
\end{figure}

%


Table \ref{Hicks2S_N} presents approximate minimum prediction horizon time for MPC formulation to be feasible for the chosen initial conditions. It can be noticed that there is significant reduction in the minimum prediction horizon time, which is primarily due to the fact that the size of the terminal regions are larger in the arbitrary controller based approach and LQR based approach when compared to the literature approach. It is well established that the computation time required for MPC optimization convergence reduce exponentially when the prediction horizon time is reduced \cite{Rawlings2017}. Hence, the efficacy of the proposed approaches to significantly reduce the prediction horizon time is effectively demonstrated using the CSTR system case study. 

Since states and inputs in the CSTR case study are converted to dimensionless entities by scaling, the time variable is also scaled. Hence, it would not be legitimate to directly compare the MPC optimization convergence loop time with the sampling time for this case. However, it is observed that the time taken for MPC optimization convergence using literature approaches is significantly larger than the time taken in the case of proposed approaches, which is primarily due to the significantly lesser prediction and control horizon time(s) requirements.

\begin{table}[htbp]
\caption{Minimum prediction horizon time required for feasibility}
\label{Hicks2S_N}
\begin{center}
\begin{tabular}{|l||c|c|c|c|}
\hline 
Approach $\downarrow$ / Point $\rightarrow$ & $P_1$ & $P_2$ & $P_3$ \\ \hline\hline
Chen and Allg\"ower's approach ($\kappa$) \cite{Chen1998} & 15 & 5 & 28  \\ \hline 
Arbitrary controller based approach ($\rho_x, \rho_u$) & 6 & 3 & 11  \\ \hline 
LQR based approach ($\rho_x, \rho_u$) &  4 &  3 &  3  \\ \hline 
\end{tabular}%
\end{center}
\end{table}

\section{Conclusions}
Approaches available in the literature for the terminal region characterization for the continuous time NMPC formulations provide a limited degrees of freedom and often result in a conservative terminal region, thereby resulting in a conservative region of attraction. Larger the terminal region larger is the region of attraction. An arbitrary stabilizing controller based approach and novel LQR based approach is presented in this work which provides a large degrees of freedom for shaping of the terminal region for the continuous time systems. Terminal penalty term is computed using the modified Lyapunov equation and subsequently the nominal asymptotic stability of continuous time NMPC with updated terminal ingredients is established. Proposed approaches provides linear controller gain and two additive matrices as the tuning parameters for enlargement of the terminal region and also makes use of inequality based method.

Efficacy of the both the terminal region characterization approaches is demonstrated using benchmark CSTR system. It is observed that terminal region area obtained using the the arbitrary controller based approach and the novel LQR based approach is approximately 45 and 412 times larger by area as compared to the largest terminal region obtained using Chen and Allg\"ower's inequality based approach from \cite{Chen1998} respectively. Continuous time NMPC simulations validate the asymptotic stability property of the designed controller. It is observed that the minimum prediction horizon required for feasibility of the NMPC formulation using the proposed approaches is significantly smaller than the one required using the literature approach. 

During the simulations for simplicity, tuning parameter matrices are chosen to be multiple of the stage weighting matrices. Future research would involve choosing a completely arbitrary tuning matrices for shaping of the terminal regions. In addition, choosing smaller control horizon time when compared to the prediction horizon time and establishing asymptotic stability is another research direction to explore. 

\bibliography{Journal08_bib1.bib}%
\clearpage

%

\end{document}